\def\BibTeX{{\rm B\kern-.05em{\sc i\kern-.025em b}\kern-.08em
    T\kern-.1667em\lower.7ex\hbox{E}\kern-.125emX}}
\newtheorem{definition}{Definition}
\newtheorem{remark}{Remark}
\newtheorem{theorem}{Theorem}
\newtheorem{proposition}{Proposition}
\begin{document}

\title{Extendible quantum measurements and\\limitations on classical communication\\
}

\author{%
  \IEEEauthorblockN{Vishal Singh\IEEEauthorrefmark{1}, Theshani Nuradha\IEEEauthorrefmark{2}, and Mark M.~Wilde\IEEEauthorrefmark{2}}
  \IEEEauthorblockA{\IEEEauthorrefmark{1}School of Applied and Engineering Physics, Cornell University, Ithaca, New York 14850, USA.}
  \IEEEauthorblockA{\IEEEauthorrefmark{2}School of Electrical and Computer
  Engineering, Cornell University, Ithaca, New York 14850, USA.}
  
  \vspace{-4ex}

}

\maketitle

\begin{abstract}
	Unextendibility of quantum states and channels is inextricably linked to the no-cloning theorem of quantum mechanics, it has played an important role in understanding and quantifying entanglement, and more recently it has found applications in providing limitations on quantum error correction and entanglement distillation.  Here we generalize the framework of unextendibility to quantum measurements and define $k$-extendible measurements for every integer $k\ge 2$. Our definition provides a hierarchy of semidefinite constraints that specify a set of measurements containing every measurement that can be realized by local operations and one-way classical communication. Furthermore, the set of $k$-extendible measurements converges to the set of measurements that can be realized by local operations and one-way classical communication as $k\to \infty$. To illustrate the utility of $k$-extendible measurements, we establish a semidefinite programming  upper bound on the one-shot classical capacity of a channel, which outperforms the best known efficiently computable bound from [Matthews and Wehner, IEEE Trans.~Inf.~Theory 60, pp.~7317--7329 (2014)] and also leads to efficiently computable upper bounds on the $n$-shot classical capacity of a channel.
\end{abstract}

\begin{IEEEkeywords}
$k$-extendibility, local measurements,  classical capacity, incompatibility of measurements
\end{IEEEkeywords}

\section{Introduction}

Measurements are one of the core constituents of quantum theory, along with quantum states and channels. Measurements on a quantum system yield numerical values that disclose  information about the state of the system, which makes them an integral component of most quantum information processing tasks. However, not all measurements are practically feasible. This has sparked interest in considering various scenarios involving restricted measurements (see, e.g.,~\cite{MWW09, YDY2014, BHLP20, RSB24}).

A natural restriction to consider is that the measurements should be realizable by local operations on spatially distant quantum systems. One might also allow the exchange of classical data between the parties holding different shares of a state, thus motivating the classes of one-way and two-way local operations and classical communication (LOCC) settings in entanglement theory~\cite{CLM+14}.

In this paper, we investigate the set of measurements that can be realized  by local operations and one-way classical communication, commonly abbreviated as one-way LOCC, under the lens of unextendibility, which is intricately linked to the incompatibility of measurements~\cite{WPF09}. We define the set of $k$-extendible measurements for every integer $k\ge 2$, which is a semidefinite approximation of the set of measurements realizable by one-way LOCC. We demonstrate the advantage of our formalism over existing semidefinite approximations of the set of measurements realizable by one-way LOCC, by obtaining improved upper bounds on the one-shot classical capacity of a channel, when compared to~\cite{MW14}.

The unextendibility of states has been studied as a separability criterion~\cite{Werner1989,DPS02,DPS04,BS10}, and its relatively recent generalization to channels~\cite{KDWW19, KDWW21,BBFS21} has found various applications in quantum information theory, such as analyzing the performance of approximate teleportation~\cite{HSW23} and quantum error correction~\cite{BBFS21, HSW23}, as well as for obtaining efficiently computable upper bounds on the non-asymptotic quantum~\cite{KDWW19, KDWW21} and private capacities~\cite{SW24_private_cap} of a channel. The unextendibility of channels leads to a hierarchy of semidefinite constraints that approximate the set of one-way LOCC channels~\cite{BBFS21}, justifying its importance in the analysis of communication protocols with forward classical assistance. It is then natural to consider this approach for measurements in order to obtain a semidefinite approximation of the set of measurements that can be realized by local operations and one-way classical communication.

The unextendibility of measurements is intricately linked to the incompatibility of measurements. We say that a positive operator-valued measure $\left\{P^y_{AB}\right\}_{y\in \mathcal{Y}}$, abbreviated as POVM, is unextendible if and only if the POVMs $\left\{P^y_{AB}\otimes I_E\right\}_{y\in \mathcal{Y}}$ and $\left\{P^y_{AE}\otimes I_B\right\}_{y\in \mathcal{Y}}$ are incompatible. It can be easily verified that the standard Bell measurement  on two qubits is an unextendible measurement; however, local measurements clearly do not fall in this category. One of our main contributions is to generalize this idea to define a hierarchy of measurement sets, indexed by an integer $k\ge 2$, that we call $k$-extendible measurements. For every integer $k\ge 2$, the set of $k$-extendible measurements contains any measurement that can be realized by local operations and one-way classical communication. Moreover, we argue that the set of $k$-extendible measurements converges to the set of measurements that can be realized by one-way LOCC, hence justifying our definition. 

The rest of our paper is organized as follows. In Section~\ref{sec:Preliminaries}, we establish the notations used in our paper and define the various classes of measurements relevant to this work, in Section~\ref{sec:k_extendibility}, we review the ideas of $k$-extendibility of states and channels and define $k$-extendibile measurements, and in Section~\ref{sec:application}, we use the definition of $k$-extendible measurements to obtain a semidefinite program (SDP) to compute an upper bound on the one-shot classical capacity of channels. We demonstrate by an example that the bound obtained in this work improves upon the bound obtained in~\cite{MW14}. Lastly, we follow the techniques developed in~\cite{FST22} (see also~\cite{BDSW24}) to argue that the semidefinite program obtained here leads to an efficiently computable upper bound on the $n$-shot classical capacity of a channel.

\section{Notations and preliminaries}\label{sec:Preliminaries}

A quantum state $\rho_A$ is a positive semidefinite, unit-trace operator acting on a Hilbert space $\mathcal{H}_A$. We denote the set of all linear operators acting on the Hilbert space $\mathcal{H}_A$ by $\mathcal{L}(A)$ and the set of all quantum states acting on this Hilbert space by $\mathcal{S}(A)$. We denote the standard maximally entangled state on systems $A$ and $B$ by $\Phi_{AB}$, which is defined  as follows:
\begin{equation}
	\Phi_{AB} \coloneqq \frac{1}{d}\sum_{i,j = 0}^{d-1}|i\rangle\!\langle j|_A\otimes |i\rangle\!\langle j|_B,
\end{equation}
where $d$ is the dimension of both system $A$ and system $B$.

A quantum channel $\mathcal{N}_{A\to B}$ is a completely positive (CP) and trace-preserving (TP) linear map that takes an operator acting on the Hilbert space $\mathcal{H}_A$ as input and outputs an operator acting on the Hilbert space $\mathcal{H}_B$. A quantum channel $\mathcal{N}_{A\to B}$ is uniquely characterized by its Choi operator, which we denote by $\Gamma^{\mathcal{N}}_{AB}$. The Choi operator of a channel $\mathcal{N}_{A\to B}$ is defined as follows:
\begin{equation}\label{eq:Choi_op_defn}
	\Gamma^{\mathcal{N}}_{RB} \coloneqq \mathcal{N}_{A\to B}\!\left(d~\Phi_{RA}\right),
\end{equation}
where system $R$ is isomorphic to system $A$ and $d$ is the dimension of system $A$. 

Throughout this paper, we use the notation $y_i^j$ to denote the tuple $(y_i,y_{i+1},\ldots,y_j)$. We use the notation $B_i^j$ to denote the joint system $B_iB_{i+1}\cdots B_j$, where the systems $B_i$, $B_{i+1}$,$\ldots$, and $B_j$ are all isomorphic to each other.

Quantum measurements, in their most general form, are mathematically described by a positive operator-valued measure (POVM), which is a collection of positive semidefinite operators $\left\{\Lambda^x\right\}_{x\in \mathcal{X}}$ that sum up to identity; that is, $\sum_{x\in \mathcal{X}}\Lambda^x = I$. The probability of obtaining the outcome $x$ when a POVM $\left\{\Lambda^x\right\}_{x\in \mathcal{X}}$ acts on a state $\rho$ is equal to $\operatorname{Tr}\!\left[\Lambda^x\rho\right]$.

The major focus of this work is on bipartite POVMs $\left\{P^y_{AB}\right\}_{y\in \mathcal{Y}}$ that can be realized by local operations and one-way classical communication (one-way LOCC). The elements of such a POVM can be expressed mathematically as follows:
\begin{equation}
    P^y_{AB} = \sum_{x\in\mathcal{X}}E^x_A\otimes F^{x,y}_B,
\end{equation}
where $\left\{E^x_A\right\}_{x\in\mathcal{X}}$ is a POVM and $\left\{F^{x,y}_B\right\}_{y\in\mathcal{Y}}$ is a POVM for all $x\in \mathcal{X}$. We denote the set of all POVMs that can be realized by one-way LOCC as $\operatorname{1WL}$.

A special class of POVMs that appear in upper bounds~\cite{MW14} on the one-shot classical capacity of channels consists of bipartite POVMs $\left\{\Lambda_{AB}, I_{AB}- \Lambda_{AB}\right\}$, where $\Lambda = \sum_{x\in \mathcal{X}}E^x_A\otimes F^x_B$ and $\left\{E^x_A\right\}_{x\in \mathcal{X}}$ and $\left\{F^x_B\right\}_{x\in \mathcal{X}}$ are POVMs. Such a measurement can be realized by performing local measurements on the two systems of a bipartite state followed by a classical processing of the measurement outcomes by a referee who accepts if both the outcomes are the same and rejects otherwise. We denote the set of all such POVMs as $\operatorname{LO}$ since they can be realized by local operations and classical post-processing without the need of classical communication from one party to the other. One can verify that $\operatorname{LO} \subsetneq \operatorname{1WL}$.

In general, the set of all POVMs that can be realized by local measurements and classical post-processing is much richer and contains POVMs of the form $\left\{\sum_{x,y}T^z(x,y)E^x_A\otimes F^y_B\right\}_{z\in \mathcal{Z}}$, where $\left\{E^x_A\right\}_{x\in \mathcal{X}}$ and $\left\{F^y_{B}\right\}_{y\in \mathcal{Y}}$ are POVMs and $\left\{T^z(x,y)\right\}_{z\in \mathcal{Z}}$ is a probability distribution for every pair $(x,y)$. However, we leave the detailed investigation of such POVMs for future work and do not consider them here in our definition of the set $\operatorname{LO}$.

A well-known relaxation of the set of POVMs realizable by LOCC is the set of positive-partial-transpose (PPT) POVMs. A bipartite POVM $\left\{\Lambda^x_{AB}\right\}_{x\in \mathcal{X}}$ is a PPT POVM if $ T_{B}\!\left(\Lambda^x_{AB}\right) \ge 0$ for all $x \in \mathcal{X}$~\cite{DLT02}, where $T_B(\cdot)$ denotes the transpose map on system $B$. The set of PPT POVMs is described by semidefinite constraints, which facilitates the optimization of a linear function over the set of PPT POVMs, by using a semidefinite program.

\section{\texorpdfstring{$k$}{k}-Extendible measurements}\label{sec:k_extendibility}

Before defining $k$-extendible POVMs, let us first briefly review $k$-extendibility of quantum states and channels. 

\begin{definition}[$k$-extendible states~\cite{Werner1989, DPS02, DPS04}]
    For a positive integer $k\ge 2$, a bipartite state $\rho_{AB}$ is said to be $k$-extendible with respect to $B$ if there exists a state $\sigma_{AB_1^k}$ such that the following conditions hold:
    \begin{align}
        \operatorname{Tr}_{B_2^k}\!\left[\sigma_{AB_1^k}\right] &= \rho_{AB_1}, \\
        W^{\pi}_{B_1^k}\sigma_{AB_1^k} \left(W^{\pi}_{B_1^k}\right)^{\dagger} &= \sigma_{AB_1^k} \quad \forall \pi \in S_k,
    \end{align}
    where $W^{\pi}_{B_1^k}$ is the permutation unitary corresponding to the permutation $\pi$ in the symmetric group $S_k$.
\end{definition}

The set of $k$-extendible states contains the set of separable states for every integer $k\ge 2$. However, all $k$-extendible states are not separable. 

\begin{definition}[$k$-extendible channels~\cite{KDWW19,KDWW21}]
    For a positive integer $k\ge 2$, a bipartite quantum channel $\mathcal{N}_{AB\to \bar{A}\bar{B}}$ is said to be $k$-extendible with respect to $B$ if the following conditions hold:
    \begin{enumerate}
    \item There exists an extension channel $\mathcal{P}_{AB_1^k\to \bar{A}\bar{B}_1^k}$ such that
    \begin{equation}
        \operatorname{Tr}_{\bar{B}_{2}^k}\circ\mathcal{P}_{AB_1^k\to \bar{A}\bar{B}_1^k} = \mathcal{N}_{AB_1\to \bar{A}\bar{B}_1}\otimes\operatorname{Tr}_{B_2^k}.
    \end{equation}
    \item The extended channel is covariant under permutations of the $B$ systems; i.e.,
    \begin{equation}
        \mathcal{W}^{\pi}_{\bar{B}_1^k}\circ\mathcal{P}_{AB_1^k\to \bar{A}\bar{B}_1^k} = \mathcal{P}_{AB_1^k\to \bar{A}\bar{B}_1^k}\circ\mathcal{W}^{\pi}_{B_1^k} \qquad \forall \pi \in S_k,
        \label{eq:perm-covariance-broadcast}
    \end{equation}
    where $\mathcal{W}^{\pi}_{B_1^k}(\cdot) \coloneqq W^{\pi}_{B_1^k}(\cdot)\left(W^{\pi}_{B_1^k}\right)^{\dagger}$ is the unitary channel corresponding to the permutation operator $W^{\pi}_{B_1^k}$.
    \end{enumerate}
\end{definition}

For every positive integer $k\ge 2$, the set of $k$-extendible channels contains the set of one-way LOCC channels. As one can immediately see from the definition, the set of $\ell$-extendible channels lies inside the set of $k$-extendible channels for every $\ell>k\ge 2$, and it has been shown that, under a slightly different definition from the above, the set of $k$-extendible channels converges to the set of one-way LOCC channels as $k\to \infty$~\cite{BBFS21}.

\subsection{\texorpdfstring{$k$}{k}-Extendible POVMs}

Now let us define $k$-extendible POVMs, which is motivated by the definition of $k$-extendible channels. Let us first consider the case of two-extendibility for simplicity.
\begin{definition}[Two-extendible POVM]\label{def:2_ext_POVM}
    A bipartite POVM $\left\{P^y_{AB}\right\}_{y\in \mathcal{Y}}$ is  two-extendible if there exists a tripartite POVM $\left\{G^{y,y'}_{ABE}\right\}_{y,y'\in \mathcal{Y}}$ such that the following conditions hold:
    \begin{align}
        \sum_{y'\in \mathcal{Y}}G^{y,y'}_{ABE} &= P^y_{AB}\otimes I_{E} \quad \forall y\in \mathcal{Y}, \label{eq:2_ext_POVM_non_sig}\\
        F_{BE}G^{y,y'}_{ABE}F^{\dagger}_{BE} &= G^{y',y}_{ABE} \quad \forall y,y' \in \mathcal{Y},\label{eq:2_ext_POVM_perm}
    \end{align}
    where system $E$ is isomorphic to system $B$ and $F_{BE}$ is the unitary swap operator.
\end{definition}

All POVMs that can be realized by one-way LOCC are two-extendible. Let $P^y_{AB}\coloneqq \sum_{x\in \mathcal{X}}E^x_A\otimes F^{x,y}_B$ be the elements of an arbitrary POVM in the set 1WL, where $\left\{E^x_A\right\}_{x\in \mathcal{X}}$ is a POVM and $\left\{F^{x,y}_B\right\}_{y \in \mathcal{Y}}$ is a POVM for all $x \in \mathcal{X}$. One can then construct a POVM with the following elements:
\begin{equation}
    G^{y,y'}_{ABE} \coloneqq \sum_{x\in \mathcal{X}} E^x_{A}\otimes F^{x,y}_{B}\otimes F^{x,y'}_E \qquad \forall y,y' \in \mathcal{Y},
\end{equation}
which satisfies the conditions in~\eqref{eq:2_ext_POVM_non_sig} and~\eqref{eq:2_ext_POVM_perm}. Therefore, we conclude that all POVMs that can be realized by one-way LOCC are two-extendible.

The set of two-extendible POVMs is strictly larger than 1WL. Consider the following POVM that acts on a two-qudit state: $\left\{\frac{1}{2}\Phi^y_{AB} + \frac{1}{2d^2}I_{AB}\right\}_{y=1}^{d^2}$, where $\left\{\Phi^y_{AB}\right\}_{y=1}^{d^2}$ is the set of qudit Bell states. The elements of this POVM do not have a positive partial transpose, and hence, the POVM does not lie in the set 1WL. However, this POVM is two-extendible with the following POVM as a two-extension: $\left\{\frac{1}{2d^2}\Phi^y_{AB}\otimes I_E + \frac{1}{2d^2}\Phi^{y'}_{AE}\otimes I_B\right\}_{y,y' = 1}^{d^2}$.

In order to obtain a tighter semidefinite approximation of 1WL, we demand that a POVM $\left\{P^y_{AB}\right\}_{y\in \mathcal{Y}}$ be symmetrically extendible to $k$ systems isomorphic to the system $B$, similar to the definitions of $k$-extendibility of states and channels. We formally define a $k$-extendible POVM below.

\begin{definition}[$k$-extendible POVM]
\label{def:k-ext-povm}
    A bipartite POVM $\left\{P^y_{AB}\right\}_{y\in \mathcal{Y}}$ is said to be $k$-extendible if there exists a $(k+1)$-partite POVM $\left\{G^{y_1^k}_{AB_1^k}\right\}_{y_1^k\in \mathcal{Y}^k}$ such that the following conditions hold:
    \begin{align}
        \sum_{y_2^{k} }G^{y_1^k}_{AB_1^k} &= P^{y_1}_{AB_1}\otimes I_{B_2^k} \quad \forall y_1 \in \mathcal{Y},\label{eq:k_ext_marg}\\
 \sum_{y_k } \operatorname{Tr}_A\!\left[G^{y_1^k}_{AB_1^k}\right] &= \frac{1}{|B_k|}\sum_{y_k }\operatorname{Tr}_{AB_k}\!\left[G^{y_1^k}_{AB_1^k}\right]\otimes I_{B_k}\notag\\
 &\qquad\qquad\qquad\quad\forall y_1^{k-1} \in \mathcal{Y}^{k-1},\label{eq:k_ext_non_sig}\\
        \mathcal{W}^{\pi}_{B_1^k}(G^{y_{\pi(1)}^{\pi(k)}}_{AB_1^k}) &= G^{y_1^k}_{AB_1^k} \quad \forall y_{1}^k\in \mathcal{Y}^k, \, \pi \in S_k,\label{eq:k_ext_perm}
    \end{align}
    where system $B_i$ is isomorphic to $B$ for every $i \in [k]$ and $\mathcal{W}^{\pi}_{B_1^k}$ is the unitary channel that permutes the systems $B_1^k$ with respect to the permutation $\pi$ in the symmetric group $S_k$. In~\eqref{eq:k_ext_perm}, we have used the following notation:
    \begin{equation}
        y_{\pi(1)}^{\pi(k)} \coloneqq \left(y_{\pi(1)},y_{\pi(2)},\ldots,y_{\pi(k)}\right),
    \end{equation}
    where $\pi(i)$ denotes the position of the $i^{\text{th}}$ system after acting with the permutation $\pi$.
\end{definition}

Every POVM in the set $\operatorname{1WL}$ is $k$-extendible for every integer $k\ge 2$. This is because one can always construct a POVM $\left\{\sum_{x\in\mathcal{X}}E^x_A \otimes\bigotimes_{i=1}^k F^{x,y_i}_{B_i}\right\}_{y_1^k\in \mathcal{Y}^k}$ that is a $k$-extension of an arbitrary POVM in 1WL $\left\{\sum_{x\in \mathcal{X}}E^x_A\otimes F^{x,y}_B\right\}_{y\in \mathcal{Y}}$, where $\left\{E^x_A\right\}_{x\in \mathcal{X}}$ is a POVM and $\left\{F^{x,y}_B\right\}_{y\in \mathcal{Y}}$ is a POVM for every $x \in \mathcal{X}$.

\begin{remark}
	The definition of $k$-extendible POVMs in Definition~\ref{def:k-ext-povm} is more closely related to the definition of $k$-extendible channels from~\cite{BBFS21} than the definition of $k$-extendible channels from~\cite{KDWW19,KDWW21}, due to the constraint in~\eqref{eq:k_ext_non_sig}.
\end{remark}

\begin{proposition}
\label{prop:convergence-to-1WL}
	The set of $k$-extendible POVMs converges to $\operatorname{1WL}$ as $k\to \infty$.
\end{proposition}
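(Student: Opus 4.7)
The plan is to reduce the claim to the analogous convergence theorem for $k$-extendible channels established in~\cite{BBFS21}, by passing from POVMs to their associated quantum-to-classical channels. To a bipartite POVM $\{P^y_{AB}\}_{y\in\cY}$, associate the channel
\begin{equation}
    \cM^P_{AB\to Y}(\rho_{AB}) \coloneqq \sum_{y\in\cY}\operatorname{Tr}\!\left[P^y_{AB}\rho_{AB}\right]|y\rangle\!\langle y|_Y,
\end{equation}
and, similarly, to a $(k+1)$-partite extension POVM $\{G^{y_1^k}_{AB_1^k}\}_{y_1^k\in\cY^k}$, associate the channel $\cP^G_{AB_1^k\to Y_1^k}$ whose classical output is spread across $k$ registers.

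First, I would verify that the three defining conditions~\eqref{eq:k_ext_marg}--\eqref{eq:k_ext_perm} of a $k$-extendible POVM translate, under this association, into the defining conditions of a $k$-extendible bipartite channel $\cM^P$ in the sense of~\cite{BBFS21}: the marginal condition~\eqref{eq:k_ext_marg} yields the first-marginal constraint on $\cP^G$; the constraint~\eqref{eq:k_ext_non_sig} yields the non-signaling constraint that distinguishes the BBFS formulation from~\cite{KDWW19,KDWW21} (as flagged in the preceding remark); and~\eqref{eq:k_ext_perm} yields covariance of $\cP^G$ under simultaneous permutation of the input systems $B_1^k$ and the classical output registers $Y_1^k$.

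Next, observe that the sets of $k$-extendible POVMs form a nested, decreasing sequence of closed subsets of the compact set of POVMs, since any $(k+1)$-extension $\{G^{y_1^{k+1}}_{AB_1^{k+1}}\}$ can be contracted to a $k$-extension via $\tilde{G}^{y_1^k}_{AB_1^k} \coloneqq \frac{1}{|B_{k+1}|}\sum_{y_{k+1}}\operatorname{Tr}_{B_{k+1}}\!\left[G^{y_1^{k+1}}_{AB_1^{k+1}}\right]$. Convergence in the Hausdorff sense thus reduces to showing that any POVM $\{P^y_{AB}\}$ that is $k$-extendible for every $k\ge 2$ lies in $\operatorname{1WL}$. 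For such a POVM, the associated channel $\cM^P$ is $k$-extendible for every $k$, hence lies in the set of one-way LOCC channels by the convergence result of~\cite{BBFS21}. Finally, any one-way LOCC channel with classical output on the $B$-side admits a decomposition $\cM^P_{AB\to Y}(\rho)=\sum_{x,y}\operatorname{Tr}\!\left[(E^x_A\otimes F^{x,y}_B)\rho\right]|y\rangle\!\langle y|_Y$, from which one immediately reads off the 1WL form $P^y_{AB}=\sum_x E^x_A\otimes F^{x,y}_B$.

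The main obstacle will be the careful verification of the correspondence between the POVM-level conditions~\eqref{eq:k_ext_marg}--\eqref{eq:k_ext_perm} and the channel-level $k$-extendibility conditions of~\cite{BBFS21}, especially because~\cite{BBFS21} formulates the non-signaling and permutation covariance constraints for general quantum outputs, whereas here the outputs are classical and the covariance on the output side is encoded merely by relabeling of indices. A secondary technical point is extracting the explicit 1WL POVM decomposition from an arbitrary one-way LOCC decomposition of $\cM^P$, which should follow from the standard structural characterization of q-c one-way LOCC channels as a local POVM on $A$ followed by classical communication of the outcome and a conditional local POVM on $B$.
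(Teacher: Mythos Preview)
Your proposal is correct and follows the same high-level strategy as the paper: pass from the POVM to its associated quantum-to-classical channel (equivalently, its Choi state) and invoke the de Finetti machinery of~\cite{BBFS21}. The paper works directly at the Choi-state level, applies~\cite[Proposition~A.5]{BBFS21} to obtain an explicit trace-norm bound of order $O(k^{-1/2})$ between any $k$-extendible POVM and some element of $\operatorname{1WL}$ (stated separately as Theorem~\ref{theo:k_ext_conv_1wl}), and then dephases to extract the $\operatorname{1WL}$ decomposition. Your route instead uses nestedness and compactness to reduce to the intersection $\bigcap_k\{k\text{-extendible}\}$ and then quotes the channel-level convergence from~\cite{BBFS21}. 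Your argument is slightly cleaner for the bare qualitative statement, but it loses the explicit rate of convergence, which the paper records as an independent quantitative result. One small point you should make explicit: your compactness reduction needs $\operatorname{1WL}$ to be closed, which is true (e.g., via a Carath\'eodory-type bound on the intermediate alphabet $\cX$) but is worth stating.
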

The statement of  Proposition~\ref{prop:convergence-to-1WL} follows from the de Finetti theorems proved in~\cite{BBFS21}, particularly the one stated in~\cite[Proposition A.5]{BBFS21}. We present a detailed proof in Appendix~\ref{app:convergence}.

\subsection{Adding PPT constraints}

We can find an even tighter semidefinite approximation of the set of POVMs  realizable by one-way LOCC by considering the PPT criterion along with the $k$-extendibility of POVMs. Let $\left\{P^y_{AB}\right\}_{y\in \mathcal{Y}}$ be a $k$-extendible POVM. We say that $\left\{P^y_{AB}\right\}_{y \in \mathcal{Y}}$ is $k$-PPT-extendible if there exists a $(k+1)$-partite POVM $\left\{G^{y_1^k}_{AB_1^k}\right\}_{y_1^k \in \mathcal{Y}^k}$ for which the following conditions hold:
\begin{equation}\label{eq:k_ppt_ext_constraints}
    T_{B_1^i}\!\left(G^{y_1^k}_{AB_1^k}\right)\ge 0 \quad \forall i \in [k], \, y_1^k\in \mathcal{Y}^k,
\end{equation}
along with the conditions in~\eqref{eq:k_ext_marg},~\eqref{eq:k_ext_non_sig}, and~\eqref{eq:k_ext_perm}. 

Note that the $k$-PPT-extendibility conditions are the same as PPT constraints for $k=1$.

The set of $k$-PPT-extendible POVMs is a strict subset of the set of $k$-extendible POVMs for every integer $k\ge 2$. As an example, recall the example POVM $\left\{\frac{1}{2}\Phi^y_{AB} + \frac{1}{2d^2}I_{AB}\right\}_{y=1}^{d^2}$ that was discussed after Definition~\ref{def:2_ext_POVM}. The aforementioned POVM is two-extendible but not two-PPT-extendible, which shows that the set of two-PPT-extendible POVMs is strictly contained inside the set of two-extendible POVMs. Similar examples can be constructed for higher values of $k$, the dimensions of the systems, and the number of outcomes of the POVM.

Every POVM in the set $\operatorname{1WL}$ is $k$-PPT-extendible for every integer $k\ge 1$. Recall that  an arbitrary POVM in the set $\operatorname{1WL}$, $\left\{\sum_{x\in \mathcal{X}}E^x_A\otimes F^{x,y}_B\right\}_{y\in \mathcal{Y}}$, has an extension $\left\{\sum_{x\in \mathcal{X}}E^x_A\otimes \bigotimes_{i=1}^k F^{x,y_i}_{B_i}\right\}_{y_1^k \in \mathcal{Y}^k}$ that satisfies the conditions in~\eqref{eq:k_ext_non_sig} and~\eqref{eq:k_ext_perm}. The transpose map is a positive map, which implies that the following inequalities hold for every integer $i \in [k]$ and $y_1^k \in \mathcal{Y}^k$:
\begin{equation}
    \sum_{x\in \mathcal{X}}E^x_A \otimes \bigotimes_{j=1}^i\! \left(F^{x,y_j}_{B_j}\right)^T\!\otimes \bigotimes_{\ell= i+1}^k F^{x,y_\ell}_{B_\ell} \ge 0.
\end{equation}
As such, the conditions in~\eqref{eq:k_ppt_ext_constraints} hold for the extended POVM $\left\{\sum_{x\in \mathcal{X}}E^x_A \otimes \bigotimes_{i=1}^k F^{x,y_i}_{B_i}\right\}_{y_1^k \in \mathcal{Y}^k}$. Since the above arguments hold for every POVM in the set $\operatorname{1WL}$ and every integer $k\ge 1$, we conclude that every POVM  realizable by one-way LOCC is $k$-PPT-extendible for every $k\ge 1$.

\section{Application to one-shot classical capacity}\label{sec:application}

The set of $k$-extendible POVMs can be described by semidefinite constraints. We use this fact to improve the best known efficiently computable upper bound~\cite{MW14} on the one-shot classical capacity of a channel.

The task of sending classical data over a quantum channel $\mathcal{N}_{A\to B}$ can be accomplished by encoding the data into a quantum state using a classical-quantum channel $\mathcal{E}_{X\to A}$, transmitting the quantum state over the quantum channel $\mathcal{N}_{A\to B}$, and decoding the quantum state using a POVM $\left\{\Lambda^{\hat{x}}_{B}\right\}_{\hat{x}\in \mathcal{X}}$ at the receiver's end, to obtain a classical outcome. The maximum error probability in this  protocol is given as follows:
\begin{multline}
    p_{\operatorname{err}}\!\left(\mathcal{E}_{X\to A},\mathcal{N}_{A\to B},\left\{\Lambda^{\hat{x}}\right\}_{\hat{x}\in\mathcal{X} }\right) \coloneqq \\\max_{x\in \mathcal{X}} \sum_{\hat{x}\in \mathcal{X}\setminus \{x\}} \operatorname{Tr}\!\left[\Lambda^{\hat{x}}_{B}\mathcal{N}_{A\to B}\!\left(\mathcal{E}_{X\to A}(|x\rangle\!\langle x|_X)\right)\right].
\end{multline}
The one-shot classical capacity of a channel $\mathcal{N}_{A\to B}$ is then defined as follows:
\begin{multline}
    C^{\varepsilon}\!\left(\mathcal{N}_{A\to B}\right) \coloneqq \\\sup_{\mathcal{E}_{X\to A},\left\{\Lambda^x_B\right\}_{x\in \mathcal{X}}}\left\{\begin{array}{c}
       \log_2|\mathcal{X}|:   \\
        p_{\operatorname{err}}\!\left(\mathcal{E},\mathcal{N},\left\{\Lambda^{\hat{x}}\right\}_{\hat{x}\in\mathcal{X} }\right) \le \varepsilon  
    \end{array}\right\}.
\end{multline}

In~\cite{MW14}, an upper bound on the one-shot classical capacity of a channel was found in terms of the restricted hypothesis testing relative entropy. For $\mathcal{M}$ a set of two-outcome POVMs, the hypothesis testing relative entropy restricted to the measurement set $\mathcal{M}$ is defined as follows~\cite{BHLP20}:
\begin{equation}\label{eq:class_cap_ub}
    D^{\varepsilon,\mathcal{M}}_{H}\!\left(\rho\Vert\sigma\right) = -\log_2 \inf_{\Lambda}\left\{\begin{array}{c}
        \operatorname{Tr}\!\left[\Lambda\sigma\right]: \\
         \operatorname{Tr}\!\left[\Lambda\rho\right]\ge 1-\varepsilon,\\ \left\{\Lambda,I-\Lambda\right\} \in \mathcal{M}
    \end{array}\right\}.
\end{equation}
The one-shot classical capacity of a channel is bounded from above by the following quantity~\cite{MW14}:
\begin{equation}\label{eq:1shot_class_cap_ub}
    C^{\varepsilon}\!\left(\mathcal{N}\right) \le \sup_{\rho_{A}} \inf_{\sigma_B} D^{\varepsilon,\operatorname{LO}}_H\!\left(\mathcal{N}_{A\to B}\!\left(\psi^{\rho}_{RA}\right)\Vert\rho_{R}\otimes\sigma_B\right),
\end{equation}
where $\psi^{\rho}_{RA}$ is the canonical purification of the state $\rho_A$~\cite{wilde_2017}, the infimum is over every state $\sigma_B$, and the supremum is over every state $\rho_{A}$. We revisit the proof of the aforementioned inequality in Appendix~\ref{app:class_cap_ub}.

Computing the upper bound in~\eqref{eq:class_cap_ub} is challenging; however, one can compute a looser yet efficiently computable upper bound on the one-shot classical capacity  by relaxing the set of measurements realizable by local operations to the set of PPT measurements (see~\cite[Proposition~24]{MW14} and~\cite[Remark~25]{MW14}). We can improve this result by considering the set of $k$-PPT-extendible POVMs discussed in the previous section. 

\begin{theorem}\label{prop:class_cap_ub_SDP}
    The one-shot classical capacity of a channel $\mathcal{N}_{A\to B}$ is bounded from above by the following quantity:
    \begin{equation}\label{eq:SDP_optimization}
        C^{\varepsilon}\!\left(\mathcal{N}_{A\to B}\right) \le -\log_2 \inf_{\substack{\rho_A\in \mathcal{S}(A),\lambda \ge 0,\\
       \left\{Q^{y_1^k}_{AB_1^k}\right\}_{y_1^k \in \{0,1\}^k}}} \lambda,
    \end{equation}
    subject to
    \begin{align}
        \sum_{y_1^k\in \left\{0,1\right\}^k} Q^{y_1^k}_{AB_1^k} &= \rho_A\otimes I_{B_1^k},\label{eq:SDP_POVM}\\
        Q^{y_1^k}_{AB_1^k} &\geq 0,\label{eq:SDP_pos}\\
        \mathcal{W}^{\pi}_{B_1^k}\!\left(Q^{y_{\pi(1)}^{\pi(k)}}_{AB_1^k}\right) &= Q^{y_{1}^{k}}_{AB_1^k} \quad \forall \pi \in S_k,\label{eq:SDP_perm_cov}\\
         T_{B_1^i}\!\left(Q^{y_1^k}_{AB_1^k}\right)&\geq 0 \quad \forall i \in [k],\label{eq:SDP_PPT}\\
        \sum_{y_2^k}\operatorname{Tr}_A\!\left[Q^{\left(0,y_2,\ldots,y_k\right)}_{AB_1^k}\right] &\le \lambda I_{B_1^k},\label{eq:SDP_max_eigval}\\
        \sum_{y_k}\operatorname{Tr}_{A}\!\left[Q^{y_1^k}_{AB_1^k}\right] &= \frac{1}{|B|}\sum_{y_k}\operatorname{Tr}_{AB_k}\!\left[Q^{y_1^k}_{AB_1^k}\right]\otimes I_{B_k},\label{eq:SDP_non_sig}
    \end{align}
    \begin{equation}\label{eq:SDP_type1_err}
        \frac{1}{|B|^{k-1}}\sum_{y_2^k}\operatorname{Tr}\!\left[Q^{\left(0,y_2,\ldots,y_k\right)}_{AB_1^k}(\Gamma^{\mathcal{N}}_{AB_1} \otimes I_{B_2^k} )\right] \ge 1-\varepsilon,
    \end{equation}
    where the constraints in~\eqref{eq:SDP_pos}--\eqref{eq:SDP_PPT} hold for all $y_1^k\in \left\{0,1\right\}^k$ and the constraints in~\eqref{eq:SDP_non_sig}  hold for all $y_1^{k-1} \in \left\{0,1\right\}^{k-1}$.
\end{theorem}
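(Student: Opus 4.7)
The plan is to derive the SDP as an upper bound on $C^{\varepsilon}(\mathcal{N})$ by starting from the Matthews--Wehner bound~\eqref{eq:1shot_class_cap_ub} and converting it into a semidefinite program via a change of variables that absorbs $\sqrt{\rho_A}$ into the measurement operators, after relaxing $\operatorname{LO}$ to the set of $k$-PPT-extendible two-outcome POVMs. I would first unpack $D_H^{\varepsilon,\operatorname{LO}}$ in~\eqref{eq:class_cap_ub}: for a fixed state $\rho_A$ and a feasible two-outcome POVM $\{\Lambda,I-\Lambda\}\in\operatorname{LO}$, the infimum over states $\sigma_B$ of $\operatorname{Tr}[\Lambda(\rho_A\otimes\sigma_B)]$ equals the operator norm $\bigl\|\operatorname{Tr}_A[(\rho_A\otimes I_B)\Lambda]\bigr\|_\infty$. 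Introducing an auxiliary scalar $\lambda \ge 0$ together with the operator inequality $\operatorname{Tr}_A[(\rho_A\otimes I_B)\Lambda] \le \lambda I_B$, the bound~\eqref{eq:1shot_class_cap_ub} becomes $-\log_2$ of the infimum of $\lambda$ over triples $(\rho_A,\Lambda,\lambda)$ subject to $\operatorname{Tr}[\Lambda\,\mathcal{N}_{A'\to B}(\psi^{\rho}_{AA'})] \ge 1-\varepsilon$ and $\{\Lambda, I-\Lambda\}\in \operatorname{LO}$.

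Next, since every $\Lambda = \sum_x E^x_A \otimes F^x_B \in \operatorname{LO}$ admits the explicit $k$-PPT-extension $G^{y_1^k}_{AB_1^k} \coloneqq \sum_x E^x_A \otimes \bigotimes_{i=1}^k F^{x,y_i}_{B_i}$, with $F^{x,0}_B \coloneqq F^x_B$ and $F^{x,1}_B \coloneqq I_B - F^x_B$ (a construction verified at the end of Section~\ref{sec:k_extendibility}), I would absorb the state into the measurement by defining
\begin{equation*}
Q^{y_1^k}_{AB_1^k} \coloneqq \bigl(\sqrt{\rho_A}\otimes I_{B_1^k}\bigr)\,G^{y_1^k}_{AB_1^k}\,\bigl(\sqrt{\rho_A}\otimes I_{B_1^k}\bigr),
\end{equation*}
and verify each SDP constraint in turn. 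Conditions~\eqref{eq:SDP_POVM}--\eqref{eq:SDP_PPT} follow directly from the analogous properties of $G$, because $\sqrt{\rho_A}\otimes I_{B_1^k}$ commutes with the permutation channels on $B_1^k$ and with partial transposes on any subset of those systems. The marginal identity $\sum_{y_2^k} G^{(0,y_2^k)} = \Lambda \otimes I_{B_2^k}$ gives $\sum_{y_2^k}\operatorname{Tr}_A[Q^{(0,y_2^k)}] = \operatorname{Tr}_A[(\rho_A \otimes I_{B_1})\Lambda] \otimes I_{B_2^k} \le \lambda I_{B_1^k}$, yielding~\eqref{eq:SDP_max_eigval}; and the standard identity $\mathcal{N}_{A'\to B}(\psi^{\rho}_{AA'}) = (\sqrt{\rho_A}\otimes I_B)\Gamma^{\mathcal{N}}_{AB}(\sqrt{\rho_A}\otimes I_B)$ (in an appropriate convention for the canonical purification), combined with the same marginal identity, yields~\eqref{eq:SDP_type1_err}.

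The main subtlety I expect is~\eqref{eq:SDP_non_sig}, the $\rho$-weighted non-signaling condition on $B_k$. This condition is \emph{not} implied in general by the unweighted condition~\eqref{eq:k_ext_non_sig} that any $k$-PPT-extension must satisfy, so it is crucial that the extension $G$ above has the explicit product form coming from an $\operatorname{LO}$ element. Indeed, using $\sum_{y_k}F^{x,y_k}_{B_k} = I_{B_k}$ and $\operatorname{Tr}[I_{B_k}] = |B|$, a direct computation shows that both sides of~\eqref{eq:SDP_non_sig} equal
\begin{equation*}
\sum_x \operatorname{Tr}[\rho_A E^x]\,F^{x,y_1}_{B_1}\otimes \cdots \otimes F^{x,y_{k-1}}_{B_{k-1}}\otimes I_{B_k},
\end{equation*}
which establishes~\eqref{eq:SDP_non_sig}. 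With every SDP constraint verified, each feasible tuple of the Matthews--Wehner minimization yields a feasible tuple of the SDP with the same value of $\lambda$, so the SDP infimum is at most the Matthews--Wehner infimum; taking $-\log_2$ gives the claimed upper bound on $C^{\varepsilon}(\mathcal{N})$.
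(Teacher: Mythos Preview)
Your overall strategy---start from~\eqref{eq:1shot_class_cap_ub}, relax $\operatorname{LO}$ to $k$-PPT-extendible two-outcome POVMs, and absorb $\sqrt{\rho_A}$ into the extension to produce the variables $Q^{y_1^k}$---is exactly the paper's. But there is a genuine gap in how you eliminate $\sigma_B$. You claim that for fixed $\rho_A$ and $\Lambda$, ``the infimum over states $\sigma_B$ of $\operatorname{Tr}[\Lambda(\rho_A\otimes\sigma_B)]$ equals the operator norm $\|\operatorname{Tr}_A[(\rho_A\otimes I_B)\Lambda]\|_\infty$.'' That is false: $\inf_{\sigma_B}\operatorname{Tr}[X_B\sigma_B]$ is the \emph{minimum} eigenvalue of $X_B$, not its operator norm. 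What actually happens is that the outer $\inf_{\sigma_B}$ in~\eqref{eq:1shot_class_cap_ub}, once pushed through the $-\log_2$, becomes a $\sup_{\sigma_B}$, leaving the order $\sup_\sigma\inf_\Lambda$. To rewrite the bound as $-\log_2\inf_{(\rho,\Lambda,\lambda)}\lambda$ with $\lambda=\|\operatorname{Tr}_A[(\rho\otimes I)\Lambda]\|_\infty$ you must swap to $\inf_\Lambda\sup_\sigma$, and the elementary max--min inequality goes the wrong way here (it would only show the SDP value is \emph{below} the Matthews--Wehner bound, which is useless). The paper closes this gap by invoking Sion's minimax theorem, using linearity of the objective in $\sigma$ and in $\Lambda$; you need that step as well.

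Your treatment of~\eqref{eq:SDP_non_sig}, by contrast, is more careful than the paper's appendix. The paper asserts that the $\rho$-weighted non-signaling condition on $Q$ follows from~\eqref{eq:k_ext_non_sig} on a generic $k$-PPT-extension $G$, but as you correctly point out, $\operatorname{Tr}_A[\rho\, G]$ is not $\operatorname{Tr}_A[G]$ for general $G$, so that implication is not immediate. Your fix---use the explicit product extension $G^{y_1^k}=\sum_x E^x_A\otimes\bigotimes_i F^{x,y_i}_{B_i}$ that any $\operatorname{LO}$ element admits and compute both sides of~\eqref{eq:SDP_non_sig} directly---is valid and cleanly verifies that constraint. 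So once you repair the minimax step, your argument goes through and is in fact tighter than the paper's on this particular point.
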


\begin{proof}
	See Appendix~\ref{app:SDP_proof}.
\end{proof}
Setting $k = 1$ in the aforementioned semidefinite program, we retrieve the semidefinite computable upper bound on the one-shot classical capacity of a channel from~\cite{MW14}.

The set of $k$-PPT-extendible measurements is strictly contained in the set of PPT measurements. As such, we expect the upper bound presented in Theorem~\ref{prop:class_cap_ub_SDP} to be strictly tighter than the semidefinite computable upper bound obtained in~\cite[Remark 25]{MW14}. We demonstrate the gap between the two bounds for a three-dimensional channel characterized by the following Choi operator:
\begin{equation}\label{eq:spec_ch_defn}
    \Gamma^{\mathcal{N}}_{AB} = \frac{6}{7}\Phi^{+}_{AB} + \frac{15}{7}\sigma^{+}_{AB},
\end{equation}
where $\Phi^+_{AB}$ is a rank-three maximally entangled state and
\begin{equation}
    \sigma^{+}_{AB}  \coloneqq \frac{1}{3}\left(|01\rangle\!\langle 01|_{AB} +|12\rangle\!\langle 12|_{AB} + |20\rangle\!\langle 20|_{AB} \right).
\end{equation}

\begin{figure}
    \centering
    \includegraphics[width=\linewidth]{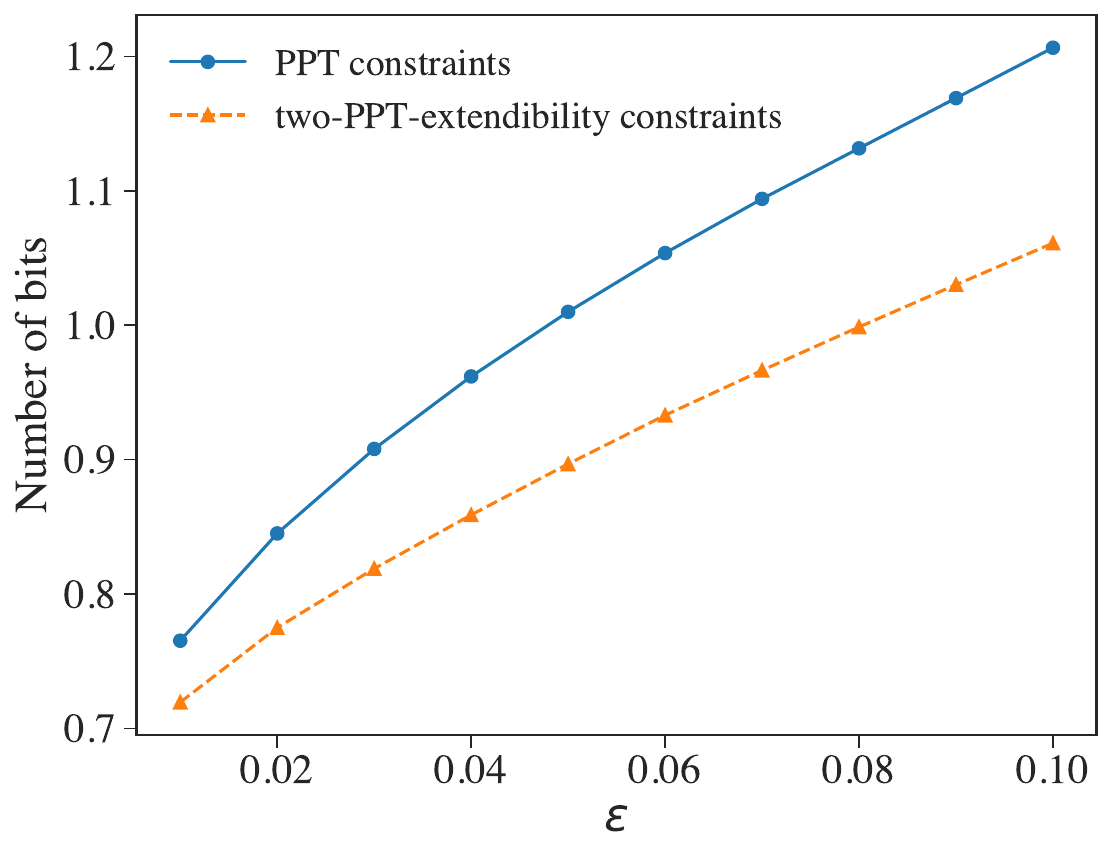}
    \caption{Upper bounds on the one-shot classical capacity of the channel described in~\eqref{eq:spec_ch_defn}, computed using the semidefinite program given in Theorem~\ref{prop:class_cap_ub_SDP}. The solid curve corresponds to setting $k=1$, which is equivalent to PPT constraints, and the dotted curve to $k=2$ in Theorem~\ref{prop:class_cap_ub_SDP}.}
    \label{fig:spec_ch_ub_vs_eps}
\end{figure}

In Figure~\ref{fig:spec_ch_ub_vs_eps}, we plot our upper bound (Theorem~\ref{prop:class_cap_ub_SDP}) on the one-shot classical capacity of the channel specified in~\eqref{eq:spec_ch_defn}. We compare the upper bound obtained by using just PPT constraints (solid curve) against the bound obtained by using two-PPT-extendibility constraints (dotted curve), and we find a significant gap between the two bounds, the latter being tighter.

\begin{remark}
    Another SDP bound on the one-shot classical capacity of a channel was discovered in~\cite[Theorem 4]{WXD18}, which, in general, is tighter than the SDP bound in~\cite{MW14}. However, upon numerical evaluation, the SDPs from~\cite{MW14}
    and~\cite[Theorem 4]{WXD18} both yield the same value for the channel given in~\eqref{eq:spec_ch_defn}. As such, Figure~\ref{fig:spec_ch_ub_vs_eps} demonstrates that our bounds perform better than the SDP bound from~\cite[Theorem 4]{WXD18} for the channel defined in~\eqref{eq:spec_ch_defn}.
\end{remark}

\subsection{Computing the classical capacity in the \texorpdfstring{$n$}{n}-shot setting}

Beyond the one-shot classical capacity, we are interested in estimating the number of bits that can be reliably sent with $n$ uses of a channel. This is equivalent to estimating the quantity $\frac{1}{n}C^{\varepsilon}\!\left(\mathcal{N}^{\otimes n}_{A\to B}\right)$, called the $n$-shot classical capacity of the channel $\mathcal{N}_{A\to B}$. 

Computing the upper bound on the $n$-shot classical capacity  using the semidefinite program given in Theorem~\ref{prop:class_cap_ub_SDP} becomes challenging, as the dimension of the variables in the SDP scale exponentially with $n$. However, following the approach from~\cite{FST22}, the symmetry of the problem under permutations of the $n$ copies of the channel can be used to reduce the computational complexity of the SDP bound in Theorem~\ref{prop:class_cap_ub_SDP}. In fact, the value of the semidefinite program in Theorem~\ref{prop:class_cap_ub_SDP} for $\mathcal{N}^{\otimes n}$ can be computed in time $O(\text{poly}(n))$ (see~\cite[Lemma 11]{BDSW24} for the precise development we consider here).

To compute the upper bound on the $n$-shot classical capacity of a channel using the semidefinite program given in Theorem~\ref{prop:class_cap_ub_SDP}, we need to perform the optimization over operators $\rho_{A_1^n}\in \mathcal{S}(A^{\otimes n})$ and $Q^{y_1^k}_{\left(AB_1^k\right)_1^n} \in \mathcal{L}\!\left(\left(AB_1^k\right)^{\otimes n}\right)$. Let us denote the joint system $\left(AB_1^k\right)_i$ by $R_i$. Now consider the following channel:
\begin{equation}
	\mathcal{T}_{R_1^n}\!\left(\cdot\right) \coloneqq \frac{1}{|S_n|}\sum_{\pi \in S_n} \mathcal{W}^{\pi}_{R_1^n}\!\left(\cdot\right),
\end{equation}
where $\mathcal{W}^{\pi}_{R_1^n}$ is a unitary channel that permutes the systems $\left\{R_i\right\}_{i \in [n]}$ with respect to the permutation $\pi$ in the symmetric group $S_n$.

For every feasible point $\left(\lambda, \rho_{A_1^n}, \left\{Q^{y_1^k}_{R_1^n}\right\}_{y_1^k\in \left\{0,1\right\}^k}\right)$ of the semidefinite program for computing the upper bound on $C^{\varepsilon}\!\left(\mathcal{N}^{\otimes n}_{A\to B}\right)$, the point $\left(\lambda, \mathcal{T}_{A_1^n}\!\left(\rho_{A_1^n}\right), \left\{\mathcal{T}_{R_1^k}\!\left(Q^{y_1^k}_{R_1^n}\right)\right\}_{y_1^k\in \left\{0,1\right\}^k}\right)$ is also a feasible point of the SDP. Therefore, it suffices to restrict the optimization in Theorem~\ref{prop:class_cap_ub_SDP} to $\lambda \ge 0$, $\mathcal{T}_{A_1^n}\!\left(\rho_{A_1^n}\right)$, and $\left\{\mathcal{T}_{R_1^k}\!\left(Q^{y_1^k}_{R_1^n}\right)\right\}_{y_1^k\in \left\{0,1\right\}^k}$, where $\rho_{A_1^n}$ is a quantum state and $\left\{Q^{y_1^k}_{R_1^n}\right\}_{y_1^k\in \left\{0,1\right\}^k}$ is a set of positive semidefinite operators. Since the optimization is now restricted to permutationally invariant operators, the arguments presented in the proof of~\cite[Lemma 11]{BDSW24} can be used to arrive at a semidefinite program to compute an upper bound on the $n$-shot classical capacity of a channel in time $O(\text{poly}(n))$ (see Appendix~\ref{app:efficiency_proof} for a detailed proof). We further suspect that taking into account the permutation symmetry arising from $k$-extendibility can reduce the complexity to $O(\text{poly}(n,k))$.

\section{Conclusion and future works}

We defined $k$-extendible POVMs and showed that they  form  semidefinite approximations of the set of one-way LOCC POVMs (i.e., those that can be realized by one-way LOCC). The set of $k$-extendible POVMs defined in our work converges to the set of POVMs that can be realized by one-way LOCC as $k\to \infty$, further justifying our definition. Demonstrating an application of our construction, we used the semidefinite representation of $k$-extendible POVMs to obtain an  upper bound on the one-shot classical capacity of channels that is tighter than the best previously known efficiently computable bound from~\cite{MW14}.

Here we only considered extendibility with respect to Bob's system, which led to a semidefinite approximation of the set of measurements realizable by one-way LOCC. One can possibly analyze extendibility with respect to both Alice's and Bob's systems to obtain a tighter semidefinite approximation of the set of measurements that can be realized by local operations only. It is also an interesting direction to explore applications of extendible measurements to restricted hypothesis testing and quantum pufferfish privacy~\cite{NGW2024}, leveraging the tools developed in~\cite{NSW25}.

\medskip

\textbf{Acknowledgments}: We thank Patrick Hayden, Kaiyuan Ji, Hami Mehrabi, and Yihui Quek for helpful discussions, and we acknowledge support from the National Science Foundation
under Grant No.~2329662.

\bibliographystyle{ieeetr}
\bibliography{Ref}
\onecolumn
\appendix
\counterwithin*{equation}{subsection}
\renewcommand\theequation{\thesubsection\arabic{equation}}

\subsection{Convergence of \texorpdfstring{$k$}{k}-extendible POVMs to \texorpdfstring{$\operatorname{1WL}$}{1WL}}\label{app:convergence}

In this section, we prove that the set of $k$-extendible POVMs converges to $\operatorname{1WL}$ as $k\to \infty$. We use the following result from~\cite[Proposition A.5]{BBFS21} to prove the claimed convergence:

\begin{proposition}[\cite{BBFS21}]\label{prop:k_ext_conv_1wl}
	Let $\rho_{A\bar{A}B_1^kY_1^k}$ be a quantum state such that $\left(\mathcal{W}^{\pi}_{B_1^k}\otimes \mathcal{W}^{\pi}_{Y_1^k}\right)\rho_{A\bar{A}B_1^kY_1^k} = \rho_{A\bar{A}B_1^kY_1^k}\quad \forall \pi \in S_k$, $\operatorname{Tr}_{\bar{A}B_1^kY_1^k}\!\left[\rho_{A\bar{A}B_1^kY_1^k}\right] = \frac{I_A}{|A|}$, and $\operatorname{Tr}_{A\bar{A}Y_k}\!\left[\rho_{A\bar{A}B_1^kY_1^k}\right] = \operatorname{Tr}_{A\bar{A}B_kY_k}\!\left[\rho_{A\bar{A}B_1^kY_1^k}\right]\otimes\frac{I_{B_k}}{|B|}$. Then there exist sets of positive semidefinite operators $\left\{\sigma^x_{A\bar{A}}\right\}_{x\in \mathcal{X}}$ and $\left\{\omega^x_{BY}\right\}_{x\in \mathcal{X}}$ such that
	\begin{equation}\label{eq:k_ext_conv_1wl}
		\left\Vert \operatorname{Tr}_{B_2^kY_2^k}\!\left[\rho_{A\bar{A}B_1^kY_1^k}\right] - \sum_{x\in \mathcal{X}}\sigma^x_{A\bar{A}}\otimes\omega^x_{BY}\right\Vert_1 \le |B|^2|Y|^2\!\left(|B||Y|+1\right)\sqrt{\frac{(2\ln 2)|A||\bar{A}|}{k}},
	\end{equation}
	$\operatorname{Tr}_Y\!\left[\omega^{x}_{BY}\right] = \frac{I_B}{|B|}$ for all $x\in \mathcal{X}$, and $\sum_{x\in \mathcal{X}}\operatorname{Tr}_{\bar{A}}\!\left[\sigma^x_{A\bar{A}}\right] = \frac{I_A}{|A|}$.
\end{proposition}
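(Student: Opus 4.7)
The plan is to treat Proposition~\ref{prop:k_ext_conv_1wl} as a quantum de Finetti theorem with extra structure: permutation invariance of $\rho$ on the $(B_iY_i)_{i=1}^k$ systems (jointly), combined with two marginal constraints (maximally mixed $A$-marginal and a non-signaling-like condition on $(B_k, Y_k)$). I would follow a measurement-based strategy, in the spirit of the Christandl--K\"onig--Mitchison--Renner and Brand\~ao--Harrow quantum de Finetti theorems: measure the $k-1$ extra copies of $BY$ with an informationally complete POVM, use the classical outcome as the ensemble index $x$, and quantify closeness to a product ensemble using the concentration guaranteed by the symmetry.

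Concretely, I would fix an informationally complete POVM $\{M^x_{BY}\}_{x \in \mathcal{X}}$ on $BY$ and apply it independently to each of the systems $B_i Y_i$ for $i = 2, \ldots, k$, producing a classical--quantum state on $A\bar{A} B_1 Y_1 X_2^k$. Permutation symmetry in the $(B_iY_i)$ systems transfers to symmetry in the outcomes, so a chain-rule argument yields a conditional mutual information bound of the form $I(A\bar{A} : B_1 Y_1 \mid X_2^k)_\rho = O((\log |A||\bar{A}|)/k)$: the total entropy of $A\bar{A}$ is bounded by $\log|A||\bar{A}|$, and permutation invariance forces the mutual information contributions to telescope across the $k-1$ measured copies. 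A Pinsker-type inequality then converts this conditional mutual information bound into a trace-distance bound between the state on $A\bar{A}B_1 Y_1$ and a convex combination of product states of the form $\sigma^x_{A\bar{A}} \otimes \omega^x_{BY}$, averaged over $x$. The dimensional prefactor $|B|^2|Y|^2(|B||Y|+1)$ in the stated bound arises from inverting the informationally complete POVM to reconstruct an operator-valued ensemble on $BY$ from the classical outcome statistics.

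The next step enforces the \emph{exact} marginal constraints on $\sigma^x$ and $\omega^x$. The assumption $\operatorname{Tr}_{\bar{A}B_1^k Y_1^k}[\rho] = I_A/|A|$ together with linearity of the partial trace implies that $\sum_x \operatorname{Tr}_{\bar{A}}[\sigma^x_{A\bar{A}}]$ is itself approximately $I_A/|A|$; a rounding step (e.g., projecting onto the affine slice with the correct $A$-marginal) upgrades this to exact equality at the cost of only a lower-order additive error. Similarly, the non-signaling-like condition $\operatorname{Tr}_{A\bar{A} Y_k}[\rho] = \operatorname{Tr}_{A\bar{A} B_k Y_k}[\rho]\otimes I_{B_k}/|B|$, after symmetrization over which copy is designated as the $k$-th one, forces $\operatorname{Tr}_Y[\omega^x_{BY}] = I_B/|B|$ for every $x$. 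Each enforcement step contributes error comparable to the original de Finetti approximation, and all such errors are absorbed into the stated bound.

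The main obstacle will be simultaneously preserving both marginal constraints after the rounding step, while tracking all dimensional and logarithmic factors precisely enough to match the exact constant $|B|^2|Y|^2(|B||Y|+1)\sqrt{(2\ln 2)|A||\bar{A}|/k}$. The non-signaling condition is particularly delicate because it explicitly singles out the $k$-th copy, breaking manifest permutation symmetry; ensuring that the output $\omega^x_{BY}$ has maximally mixed $B$-marginal will require an averaging argument that re-symmetrizes the conclusion while still respecting the $A$-marginal constraint. Balancing these two constraints without degrading the dimensional bound is where I expect the technical core of the proof to lie.
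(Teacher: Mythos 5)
First, a point of comparison: the paper does not prove this proposition at all --- it is imported verbatim from \cite[Proposition~A.5]{BBFS21}, and the manuscript only uses it as a black box to derive Theorem~\ref{theo:k_ext_conv_1wl}. So there is no in-paper proof to measure your attempt against; the relevant benchmark is the argument in \cite{BBFS21}. Your overall strategy --- measure the $k-1$ auxiliary copies $B_2^kY_2^k$ with an informationally complete POVM, let the outcome be the ensemble index $x$, bound a conditional mutual information by a chain-rule/telescoping argument using permutation invariance, convert to trace distance via Pinsker, and absorb the distortion of inverting the IC POVM into the dimensional prefactor --- is indeed the route taken in that reference, so at the level of architecture your plan is sound.

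The substantive issue is that you mislocate the difficulty. You treat the exact marginal constraints $\sum_{x}\operatorname{Tr}_{\bar{A}}[\sigma^x_{A\bar{A}}]=I_A/|A|$ and $\operatorname{Tr}_Y[\omega^x_{BY}]=I_B/|B|$ as requiring a ``rounding'' or re-symmetrization step with additional error terms, and you flag this as the technical core. In fact, with the natural choice of decomposition --- $\sigma^x_{A\bar{A}}$ and $\omega^x_{BY}$ taken as (sub-normalized) conditional states of $A\bar{A}$ and of $B_1Y_1$ given outcome $x$ of the measurement on $B_2^kY_2^k$ --- both constraints hold \emph{exactly} and cost nothing. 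Indeed $\sum_x \sigma^x_{A\bar{A}} = \rho_{A\bar{A}}$ by construction, so the $A$-marginal condition is inherited directly from the hypothesis $\operatorname{Tr}_{\bar{A}B_1^kY_1^k}[\rho]=I_A/|A|$; and the non-signaling hypothesis, pushed to the first copy by the assumed permutation covariance, says precisely that after tracing out $Y_1$ the system $B_1$ is maximally mixed \emph{and in tensor product with} $A\bar{A}B_2^kY_2^k$, so conditioning on any measurement outcome on $B_2^kY_2^k$ cannot disturb it; hence $\operatorname{Tr}_Y[\omega^x_{BY}] \propto I_B/|B|$ for every $x$ with no averaging argument and no lower-order error. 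Separately, your claimed intermediate bound $I(A\bar{A}:B_1Y_1\mid X_2^k)=O(\log(|A||\bar{A}|)/k)$ would, via Pinsker, produce $\sqrt{(2\ln 2)\log(|A||\bar{A}|)/k}$ rather than the stated $\sqrt{(2\ln 2)|A||\bar{A}|/k}$; the mismatch is harmless in the sense that your version would be stronger, but it signals that the constant-tracking in your sketch is not yet aligned with the bound you are asked to reproduce, and that part of the argument remains at the level of intention rather than proof.
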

We can now show convergence of the set of $k$-extendible POVMs to $\operatorname{1WL}$ in the sense that the trace distance between the elements of a $k$-extendible POVM and some POVM in $\operatorname{1WL}$ approaches zero as $k\to \infty$. We state this result formally in Theorem~\ref{theo:k_ext_conv_1wl} below:

\begin{theorem}\label{theo:k_ext_conv_1wl}
    Let $\left\{P^y_{AB}\right\}_{y\in \mathcal{Y}}$ be an arbitrary $k$-extendible POVM. Then there exists a POVM $\left\{\Lambda^y_{AB}\right\}_{y\in \mathcal{Y}}\in \operatorname{1WL}$ such that the following inequality holds:
    \begin{equation}\label{eq:k_ext_POVM_1WL_POVM_diff}
        \left\Vert P^y_{AB} - \Lambda^y_{AB}\right\Vert_{1} \le |B|^3|Y|^2\!\left(|B||Y|+1\right)\sqrt{\frac{2\ln 2 |A|^3}{k}} \qquad \forall y\in \mathcal{Y}.
    \end{equation}
\end{theorem}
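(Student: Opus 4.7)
The plan is to encode the $k$-extendible POVM as a quantum state that satisfies the hypotheses of Proposition~\ref{prop:k_ext_conv_1wl}, invoke that proposition to obtain a separable approximation, and then read off a $\operatorname{1WL}$ POVM by rescaling the two factors of the decomposition. Let $\{G^{y_1^k}_{AB_1^k}\}_{y_1^k \in \mathcal{Y}^k}$ be a $(k+1)$-partite POVM witnessing the $k$-extendibility of $\{P^y_{AB}\}_y$ as in Definition~\ref{def:k-ext-povm}, and introduce classical registers $Y_1,\ldots,Y_k$ of dimension $|\mathcal{Y}|$ with basis $\{|y\rangle\}_{y\in \mathcal{Y}}$. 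I would define
\[
\rho_{AB_1^k Y_1^k} \coloneqq \frac{1}{|A|\,|B|^k}\sum_{y_1^k \in \mathcal{Y}^k} G^{y_1^k}_{AB_1^k} \otimes |y_1^k\rangle\!\langle y_1^k|_{Y_1^k},
\]
and verify (taking $\bar{A}$ trivial in Proposition~\ref{prop:k_ext_conv_1wl}) three properties: normalization and $\operatorname{Tr}_{B_1^k Y_1^k}[\rho]=I_A/|A|$ follow from $\sum_{y_1^k} G^{y_1^k}_{AB_1^k} = I_{AB_1^k}$; joint permutation invariance under $\mathcal{W}^\pi_{B_1^k}\otimes \mathcal{W}^\pi_{Y_1^k}$ follows from~\eqref{eq:k_ext_perm} after relabeling the summation variable; and the non-signaling relation $\operatorname{Tr}_{A Y_k}[\rho] = \operatorname{Tr}_{A B_k Y_k}[\rho]\otimes I_{B_k}/|B|$ is a direct transcription of~\eqref{eq:k_ext_non_sig}.

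Proposition~\ref{prop:k_ext_conv_1wl} then produces positive operators $\{\sigma^x_A\}_{x\in \mathcal{X}}$ and $\{\omega^x_{BY}\}_{x\in \mathcal{X}}$ with $\sum_x \sigma^x_A = I_A/|A|$ and $\operatorname{Tr}_Y[\omega^x_{BY}] = I_B/|B|$ such that
\[
\Bigl\| \operatorname{Tr}_{B_2^k Y_2^k}[\rho] - \sum_x \sigma^x_A \otimes \omega^x_{BY}\Bigr\|_1 \le \varepsilon_k,
\]
where $\varepsilon_k \coloneqq |B|^2|\mathcal{Y}|^2(|B||\mathcal{Y}|+1)\sqrt{2\ln 2\,|A|/k}$. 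Using the marginal condition~\eqref{eq:k_ext_marg}, the left argument reduces to $\frac{1}{|A|\,|B|}\sum_y P^y_{AB}\otimes |y\rangle\!\langle y|_Y$, which is classical on $Y$; applying the dephasing channel in the $Y$-basis to both sides is contractive under the trace norm and lets me assume $\omega^x_{BY} = \sum_y \nu^{x,y}_B \otimes |y\rangle\!\langle y|_Y$ with $\nu^{x,y}_B\ge 0$ and $\sum_y \nu^{x,y}_B = I_B/|B|$. Setting $E^x_A \coloneqq |A|\sigma^x_A$ and $F^{x,y}_B\coloneqq |B|\nu^{x,y}_B$ yields a POVM on $A$ and, for each $x$, a POVM on $B$, so the candidate $\operatorname{1WL}$ POVM is $\Lambda^y_{AB}\coloneqq \sum_x E^x_A \otimes F^{x,y}_B$.

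Because both sides of the resulting trace-norm inequality are block-diagonal on $Y$, the bound factors as $\frac{1}{|A||B|}\sum_y \|P^y_{AB}-\Lambda^y_{AB}\|_1 \le \varepsilon_k$, and bounding any single summand by the whole (nonnegative) sum, then multiplying through by $|A||B|$, yields
\[
\|P^y_{AB}-\Lambda^y_{AB}\|_1 \le |A||B|\,\varepsilon_k = |B|^3|\mathcal{Y}|^2(|B||\mathcal{Y}|+1)\sqrt{\tfrac{2\ln 2\,|A|^3}{k}},
\]
which is~\eqref{eq:k_ext_POVM_1WL_POVM_diff}. I expect the main technical obstacle to be the bookkeeping for the joint permutation invariance of $\rho$, since one must align the action of $\pi$ on the classical label registers $Y_1^k$ with its action on the extension registers $B_1^k$ so that the tuple reindexing in~\eqref{eq:k_ext_perm} cancels cleanly after a change of the outer summation variable; the dephasing step and the rescaling into POVMs are otherwise routine.
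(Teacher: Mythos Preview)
Your proposal is correct and follows essentially the same route as the paper's proof: encode the extended POVM as a permutation-invariant state on $AB_1^kY_1^k$, apply Proposition~\ref{prop:k_ext_conv_1wl} (with $\bar A$ trivial), dephase on $Y$, rescale the two factors into POVMs, and use the block-diagonal structure on $Y$ to bound each term. The only cosmetic difference is that the paper packages $\rho$ as the Choi state of the measurement channel (which introduces a global transpose that must be removed at the end), whereas you work directly with $G^{y_1^k}_{AB_1^k}$ and thereby avoid that extra step.
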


\begin{proof}
The action of an arbitrary POVM in $\operatorname{1WL}$ on an arbitrary state $\rho_{RAB}$ can be mathematically described as follows:
\begin{equation}
	\mathcal{M}_{AB\to Y}\!\left(\rho_{RAB}\right) = \sum_{y\in \mathcal{Y}}\sum_{x\in \mathcal{X}}\operatorname{Tr}_{AB}\!\left[\left(E^x_A\otimes F^{x,y}_B\right)\rho_{RAB}\right]\otimes |y\rangle\!\langle y|_Y,
\end{equation}
where $\left\{E^x_A\right\}_{x\in \mathcal{X}}$ is a POVM and $\left\{F^{x,y}_B\right\}_{y\in \mathcal{Y}}$ is a POVM for every $x\in \mathcal{X}$. The Choi state of an arbitrary measurement channel $\mathcal{M}_{AB\to Y}$ corresponding to the bipartite POVM $\left\{M^y_{AB}\right\}_{y\in \mathcal{Y}}$ is given as follows:
\begin{align}
	\Phi^{\mathcal{M}}_{A'B'Y} &= \mathcal{M}_{AB\to Y}\!\left(\Phi_{AA'}\otimes \Phi_{BB'}\right)\\
	&= \sum_{y\in \mathcal{Y}}\operatorname{Tr}_{AB}\!\left[M^y_{AB}\!\left(\Phi_{AA'}\otimes \Phi_{BB'}\right)\right]\otimes |y\rangle\!\langle y|_Y\\
	&= \sum_{y\in \mathcal{Y}}\operatorname{Tr}_{AB}\!\left[\left(M^y_{A'B'}\right)^T\!\left(\Phi_{AA'}\otimes \Phi_{BB'}\right)\right]\otimes |y\rangle\!\langle y|_Y\\
	&= \frac{1}{|A||B|}\sum_{y \in \mathcal{Y}}\left(M^y_{A'B'}\right)^T\otimes |y\rangle\!\langle y|_Y,\label{eq:POVM_choi_final}
\end{align}
where system $A'$ is isomorphic to system $A$ and system $B'$ is isomorphic to system $B$. The states $\Phi_{AA'}$ and $\Phi_{BB'}$ are the maximally entangled states on systems $AA'$ and $BB'$, respectively. The equality in~\eqref{eq:POVM_choi_final} follows from the fact that the marginal of a maximally entangled state on one of the systems is the maximally mixed state $\frac{I}{d}$. The Choi state of a channel corresponding to a POVM in $\operatorname{1WL}$ can hence be written as follows:
\begin{equation}
	\Phi^{\operatorname{1WL}}_{ABY} = \frac{1}{|A||B|}\sum_{y\in \mathcal{Y}}\sum_{x\in \mathcal{X}}\left(E^x_{A}\right)^T\otimes \left(F^{x,y}_{B}\right)^T\otimes|y\rangle\!\langle y|_Y . \label{eq:1WL_choi_final}
\end{equation}

Let $\Phi^{\mathcal{P}}_{ABY}$ be the Choi state of the channel $\mathcal{P}_{AB\to Y}$ corresponding to a $k$-extendible POVM $\left\{P^y_{AB}\right\}_{y\in \mathcal{Y}}$. The complete positivity of the channel implies that $\Phi^{\mathcal{P}}_{ABY}\ge 0$, and the trace-preserving condition implies that
\begin{equation}\label{eq:k_ext_ch_TP_cond}
	\operatorname{Tr}_Y\!\left[\Phi^{\mathcal{P}}_{ABY}\right] = \frac{I_{AB}}{|A||B|},
\end{equation}
where $|A|$ and $|B|$ are the dimensions of systems $A$ and $B$, respectively. Let $\Phi^{\mathcal{G}}_{AB_1^kY_1^k}$ be the Choi state of the channel $\mathcal{G}_{AB_1^k\to Y_1^k}$ corresponding to the extended POVM $\left\{G^{y_1^k}_{AB_1^k}\right\}_{y_1^k \in \mathcal{Y}^k}$. The conditions in~\eqref{eq:k_ext_marg}--\eqref{eq:k_ext_perm} imply the following conditions for $\Phi^{\mathcal{G}}_{AB_1^kY_1^k}$:
\begin{align}
	\operatorname{Tr}_{Y_2^k}\!\left[\Phi^{\mathcal{G}}_{AB_1^kY_1^k}\right] &= \Phi^{\mathcal{P}}_{AB_1Y_1}\otimes \frac{I_{B_2^k}}{|B|^{k-1}},\label{eq:ext_ch_marg_cond}\\
	\operatorname{Tr}_{AY_k}\!\left[\Phi^{\mathcal{G}}_{AB_1^kY_1^k}\right] &= \operatorname{Tr}_{AB_kY_k}\!\left[\Phi^{\mathcal{G}}_{AB_1^kY_1^k}\right]\otimes \frac{I_{B_k}}{|B|},\label{eq:ext_ch_non_sig}\\
	\left(\mathcal{W}^{\pi}_{B_1^k}\otimes \mathcal{W}^{\pi}_{Y_1^k}\right)\Phi^{\mathcal{G}}_{AB_1^kY_1^k} &= \Phi^{\mathcal{G}}_{AB_1^kY_1^k} \qquad \forall \pi \in S_k,\label{eq:ext_ch_perm}	
\end{align}
respectively. The conditions in~\eqref{eq:k_ext_ch_TP_cond} and~\eqref{eq:ext_ch_marg_cond} together imply that 
\begin{equation}\label{eq:ext_ch_TP}
	\operatorname{Tr}_{B_1^kY_1^k}\!\left[\Phi^{\mathcal{G}}_{AB_1^kY_1^k}\right] = \frac{I_A}{|A|}.
\end{equation}

As is evident from~\eqref{eq:ext_ch_non_sig}--\eqref{eq:ext_ch_TP}, we can apply Proposition~\ref{prop:k_ext_conv_1wl} to $\Phi^{\mathcal{G}}_{AB_1^kY_1^k}$, with $\bar{A}$ chosen to be a trivial system. Let $\left\{\sigma^x_{A\bar{A}}\right\}_{x\in \mathcal{X}}$ and $\left\{\omega^x_{BY}\right\}_{x\in \mathcal{X}}$ be sets of positive semidefinite operators such that $\operatorname{Tr}_Y\!\left[\omega^x_{BY}\right] = \frac{I_B}{|B|}$ for every $x\in \mathcal{X}$, $\sum_{x\in \mathcal{X}}\sigma^x_{A}= \frac{I_A}{|A|}$, and
\begin{align}
	\left\Vert \operatorname{Tr}_{B_2^kY_2^k}\!\left[\Phi^{\mathcal{G}}_{AB_1^kY_1^k}\right] - \sum_{x\in \mathcal{X}}\sigma^x_{A}\otimes\omega^x_{BY}\right\Vert_1 &= \left\Vert \Phi^{\mathcal{P}}_{ABY} - \sum_{x\in \mathcal{X}}\sigma^x_{A}\otimes\omega^x_{BY}\right\Vert_1\\
	&= \left\Vert \frac{1}{|A||B|}\sum_{y \in \mathcal{Y}}\left(P^y_{AB}\right)^T\otimes |y\rangle\!\langle y|_Y - \sum_{x\in \mathcal{X}}\sigma^x_{A}\otimes\omega^x_{BY}\right\Vert_1\\
	&\le |B|^2|Y|^2\!\left(|B||Y|+1\right)\sqrt{\frac{(2\ln 2)|A|}{k}}.\label{eq:k_ext_POVM_1wl_convergence}
\end{align} 
Now consider the following dephasing channel:
\begin{equation}
	\Delta_Y\!\left(\cdot\right) \coloneqq \sum_{y\in \mathcal{Y}}\langle y|\cdot|y\rangle_Y\otimes |y\rangle\!\langle y|_Y.
\end{equation}
The data-processing inequality for the  trace distance implies that
\begin{align}
	&\left\Vert \frac{1}{|A||B|}\sum_{y \in \mathcal{Y}}\left(P^y_{AB}\right)^T\otimes |y\rangle\!\langle y|_Y - \sum_{x\in \mathcal{X}}\sigma^x_{A}\otimes\omega^x_{BY}\right\Vert_1\notag\\ &\ge \left\Vert \Delta_Y\!\left(\frac{1}{|A||B|}\sum_{y \in \mathcal{Y}}\left(P^y_{AB}\right)^T\otimes |y\rangle\!\langle y|_Y\right) - \Delta_Y\!\left(\sum_{x\in \mathcal{X}}\sigma^x_{A}\otimes\omega^x_{BY}\right)\right\Vert_1\\
	&= \left\Vert \frac{1}{|A||B|}\sum_{y \in \mathcal{Y}}\left(P^y_{AB}\right)^T\otimes |y\rangle\!\langle y|_Y - \sum_{x\in \mathcal{X}}\sum_{y\in \mathcal{Y}}\sigma^x_{A}\otimes\langle y|\omega^x_{BY}|y\rangle_Y \otimes |y\rangle\!\langle y|_Y\right\Vert_1\\
	&= \frac{1}{|A||B|}\sum_{y\in \mathcal{Y}}\left\Vert \left(P^y_{AB}\right)^T -  |A||B|\sum_{x\in \mathcal{X}}\sigma^x_{A}\otimes\langle y|\omega^x_{BY}|y\rangle_Y\right\Vert_1,\label{eq:k_ext_POVM_el_1WL_close}
\end{align}
where the final equality follows from the direct-sum property of the trace norm. Let us define $E^x_A \coloneqq |A|\left(\sigma^x_A\right)^T$ for every $x\in \mathcal{X}$ and $F^{x,y}_{B} \coloneqq |B|\langle y|\omega^{x}_{BY}|y\rangle_Y$ for every $x\in \mathcal{X}$ and $y\in \mathcal{Y}$. Since $\sum_{x\in \mathcal{X}}\sigma^x_A = \frac{I_A}{|A|}$ and $\operatorname{Tr}_Y\!\left[\omega^x_{BY}\right] = \frac{I_B}{|B|}$, we conclude that $\sum_{x\in \mathcal{X}}E^x_A = I_A$ and $\sum_{y\in \mathcal{Y}}F^{x,y}_B = I_B$ for every $x\in \mathcal{X}$. Therefore, $\left\{E^x_A\right\}_{x\in \mathcal{X}}$ is a POVM, and $\left\{F^{x,y}_{B}\right\}_{y\in \mathcal{Y}}$ is a POVM for every $x \in \mathcal{X}$. As such, $\left\{|A||B|\sum_{x\in \mathcal{X}}\sigma^x_{A}\otimes\langle y|\omega^x_{BY}|y\rangle_Y \right\}_{y\in \mathcal{Y}}$ is a POVM in $\operatorname{1WL}$.

The inequality in~\eqref{eq:k_ext_POVM_el_1WL_close} along with the inequality in~\eqref{eq:k_ext_POVM_1wl_convergence} imply that the following inequality holds for an arbitrary $k$-extendible POVM $\left\{P^y_{AB}\right\}_{y\in \mathcal{Y}}$ and some POVM $\left\{\Lambda^y_{AB}\right\} \in \operatorname{1WL}$:
\begin{align}
	\frac{1}{|A||B|}\left\Vert \left(P^y_{AB}\right)^T -  \left(\Lambda^y_{AB}\right)^T\right\Vert_1 &\le |B|^2|Y|^2\!\left(|B||Y|+1\right)\sqrt{\frac{(2\ln 2)|A|}{k}} \qquad \forall y \in \mathcal{Y}\\
    \implies\left\Vert \left(P^y_{AB} -  \Lambda^y_{AB}\right)^T\right\Vert_1 &\le |B|^3|Y|^2\!\left(|B||Y|+1\right)\sqrt{\frac{(2\ln 2)|A|^{3}}{k}} \qquad \forall y \in \mathcal{Y}\\
    \implies\left\Vert P^y_{AB} -  \Lambda^y_{AB}\right\Vert_1 &\le |B|^3|Y|^2\!\left(|B||Y|+1\right)\sqrt{\frac{(2\ln 2)|A|^{3}}{k}} \qquad \forall y \in \mathcal{Y},
\end{align}
where the final inequality follows from the fact that the transpose map does not change the spectral values of the operator it acts on. 
\end{proof}

The right-hand side of the inequality in~\eqref{eq:k_ext_POVM_1WL_POVM_diff} approaches zero as $k\to \infty$ for fixed $|A|, |B|$, and $|Y|$. Hence, the set of $k$-extendible POVMs converges to $\operatorname{1WL}$ as $k\to \infty$.

\subsection{Upper bound on the one-shot classical capacity of a channel}\label{app:class_cap_ub}

In this section, we revisit the proof of the upper bound on the one-shot classical capacity of a channel from~\cite{MW14} (recalled  in~\eqref{eq:1shot_class_cap_ub}). 

Consider an arbitrary protocol for communicating classical data over a quantum channel $\mathcal{N}_{A\to B}$, in which a message $m \in \mathcal{M}$ is encoded into a quantum state using an encoding classical-quantum channel $\mathcal{E}_{M\to A}$ and transmitted over the channel $\mathcal{N}_{A\to B}$ to a receiver, who then decodes the message using a POVM $\left\{\Lambda^{\hat{m}}_B\right\}_{\hat{m} \in \mathcal{M}}$, such that the worst-case error probability is less than or equal to $\varepsilon \in [0,1]$. Let us define $\rho^m_A \coloneqq \mathcal{E}_{M\to A}\!\left(|m\rangle\!\langle m|_M\right)$. The worst-case error probability of the aforementioned protocol is given by the following expression:
\begin{equation}
    p_{\operatorname{err}}\!\left(\mathcal{E}_{M\to A}; \mathcal{N}_{A\to B}; \left\{\Lambda^m_B\right\}_{m\in \mathcal{M}}\right) \coloneqq \max_{m\in \mathcal{M}} \left\{1-\operatorname{Tr}\!\left[\Lambda^m_B\mathcal{N}_{A\to B}\!\left(\rho^m_A\right)\right]\right\},
\end{equation}
and the average error probability, with uniform prior, is defined as follows:
\begin{equation}
    \overline{p}_{\operatorname{err}}\!\left(\mathcal{E}_{M\to A}; \mathcal{N}_{A\to B}; \left\{\Lambda^m_B\right\}_{m\in \mathcal{M}}\right) \coloneqq 1-\sum_{m\in \mathcal{M}}\frac{1}{|\mathcal{M}|}\operatorname{Tr}\!\left[\Lambda^m_B\mathcal{N}_{A\to B}\!\left(\rho^m_A\right)\right].
\end{equation}

Let $\psi^m_{RA}$ be the canonical purification of the state $\rho^m_A$. That is,
\begin{equation}
    \psi^m_{RA} \coloneqq \left(I_R\otimes \sqrt{\rho^m_A}\right)\Gamma_{RA}\!\left(I_R\otimes \sqrt{\rho^m_A}\right),
\end{equation}
where $\Gamma_{RA} \coloneqq \sum_{i,j=0}^{|A|-1} |i\rangle\!\langle j|_R\otimes |i\rangle\!\langle j|_A$ is the unnormalized maximally entangled operator and $|R| = |A|$. Using the well-known transpose trick~\cite{wilde_2017}, we can write $\psi^m_{RA}$ as follows:
\begin{equation}
    \psi^m_{RA} =  \left(\left(\sqrt{\rho^m_R}\right)^T\otimes I_A \right)\Gamma_{RA}\!\left(\left(\sqrt{\rho^m_R}\right)^T\otimes I_A \right)  
\end{equation}
The average error probability, which we shall denote simply as $\overline{p}_{\operatorname{err}}$ for the rest of the proof, can be rewritten as follows:
\begin{align}
    \overline{p}_{\operatorname{err}} &=  1-\sum_{m \in \mathcal{M}} \frac{1}{|\mathcal{M}|}\operatorname{Tr}\!\left[\left(I_R\otimes\Lambda^m_B\right)\mathcal{N}_{A\to B}\!\left(\psi^m_{RA}\right)\right]\\
    &= 1-\sum_{m \in \mathcal{M}}\frac{1}{|\mathcal{M}|}\operatorname{Tr}\!\left[\left(\left(\rho^m_R\right)^T\otimes\Lambda^m_B\right)\mathcal{N}_{A\to B}\!\left(\Gamma_{RA}\right)\right]\\
    &= 1-\operatorname{Tr}\!\left[\left(\sum_{m \in \mathcal{M}} \frac{1}{|\mathcal{M}|}\left(\rho^m_R\right)^T\otimes\Lambda^m_B\right)\Gamma^{\mathcal{N}}_{RB}\right],
\end{align}
where $\Gamma^{\mathcal{N}}_{RB}$ is the Choi operator of the channel $\mathcal{N}_{A\to B}$. Let us denote the average encoded state by 
\begin{equation}\label{eq:avg_enc_state}
	\overline{\rho}_A \coloneqq \frac{1}{|\mathcal{M}|}\sum_{m \in \mathcal{M}} \rho^m_A = \mathcal{E}_{M\to A}\!\left(\frac{1}{|\mathcal{M}|}\sum_{m\in \mathcal{M}}|m\rangle\!\langle m|_M\right),
\end{equation}
and let $\overline{\psi}_{RA}$ be the canonical purification of $\overline{\rho}_A$; that is,
\begin{align}
    \overline{\psi}_{RA} &\coloneqq \left(I_R\otimes \sqrt{\overline{\rho}_A}\right)\Gamma_{RA}\!\left(I_R\otimes \sqrt{\overline{\rho}_A}\right)\\
    &= \left(\sqrt{\overline{\rho}_R}^T \otimes I_A \right)\Gamma_{RA}\!\left(\sqrt{\overline{\rho}_R}^T \otimes I_A\right)\label{eq:avg_pur_st}.
\end{align}
Note that $\operatorname{supp}\!\left(\rho^m_A\right) \subseteq \operatorname{supp}\!\left(\overline{\rho}_A\right)$. Therefore, $\rho^m_A = \Pi^{\overline{\rho}}_A\rho^m_A\Pi^{\overline{\rho}}_A$, where $\Pi^{\overline{\rho}}_A$ is the projection onto the support of $\overline{\rho}_A$. The average error probability can then be written as follows:
\begin{align}
    p_{\operatorname{err}} &= 1-\operatorname{Tr}\!\left[\left(\sum_{m \in \mathcal{M}} \frac{1}{|\mathcal{M}|}\left(\Pi^{\overline{\rho}}_R\rho^m_R\Pi^{\overline{\rho}}_R\right)^T\otimes\Lambda^m_B\right)\Gamma^{\mathcal{N}}_{RB}\right]\\
    &= 1-\operatorname{Tr}\!\left[\left(\sum_{m \in \mathcal{M}} \frac{1}{|\mathcal{M}|}\left(\overline{\rho}_R^{\frac{1}{2}}\overline{\rho}_R^{-\frac{1}{2}}\rho^m_R\overline{\rho}_R^{-\frac{1}{2}}\overline{\rho}_R^{\frac{1}{2}}\right)^T\otimes\Lambda^m_B\right)\Gamma^{\mathcal{N}}_{RB}\right]\\
    &= 1 - \operatorname{Tr}\!\left[\left(\sum_{m \in \mathcal{M}} \frac{1}{|\mathcal{M}|}\left(\overline{\rho}_R^{-\frac{1}{2}}\rho^m_R\overline{\rho}_R^{-\frac{1}{2}}\right)^T\otimes\Lambda^m_B\right)\sqrt{\overline{\rho}_R}^T\Gamma^{\mathcal{N}}_{RB}\sqrt{\overline{\rho}_R}^T\right]\\
    &= 1- \operatorname{Tr}\!\left[\left(\sum_{m \in \mathcal{M}} \frac{1}{|\mathcal{M}|}\left(\overline{\rho}_R^{-\frac{1}{2}}\rho^m_R\overline{\rho}_R^{-\frac{1}{2}}\right)^T\otimes\Lambda^m_B\right)\mathcal{N}_{A\to B}\!\left(\overline{\psi}_{RA}\right)\right],
\end{align}
where the inverse of $\overline{\rho}_R$ is taken on the support of $\overline{\rho}_R$. The third equality follows from the cyclicity of trace and the final equality follows from~\eqref{eq:avg_pur_st} and the definition of the Choi operator of a channel given in~\eqref{eq:Choi_op_defn}.
Let us define the following positive semidefinite operator:
\begin{equation}\label{eq:Theta_defn}
    \Theta^m_R \coloneqq \frac{1}{|\mathcal{M}|}\left(\overline{\rho}_R^{-\frac{1}{2}}\rho^m_R\overline{\rho}_R^{-\frac{1}{2}}\right)^T.
\end{equation}
The set of operators $\left\{\Theta^m_R\right\}_{m\in \mathcal{M}}$ forms a POVM as they are all positive semidefinite and sum up to the identity operator. This measurement is well known as the pretty good measurement. Now consider the following operator:
\begin{equation}\label{eq:omega_defn}
    \Omega_{RB}\coloneq \sum_{m\in \mathcal{M}}\Theta^m_R\otimes \Lambda^m_B.
\end{equation}
Clearly, $0\le \Omega_{RB} \le I_{RB}$. Moreover, the POVM $\left\{\Omega_{RB}, I_{RB} - \Omega_{RB}\right\}$ can be realized by local operations only and lies in the set $\operatorname{LO}$. To realize the aforementioned POVM, the two parties apply the POVMs $\left\{\Theta^m_R\right\}_{m\in \mathcal{M}}$ and $\left\{\Lambda^m_B\right\}_{m\in \mathcal{M}}$ on their respective systems locally and send the classical outcomes to a referee who accepts if both the outcomes are the same and rejects otherwise. As such, there exists a POVM $\left\{\Omega_{RB}, I_{RB} - \Omega_{RB}\right\} \in \operatorname{LO}$ such that
\begin{equation}
    \overline{p}_{\operatorname{err}} = 1-\operatorname{Tr}\!\left[\Omega_{RB}\mathcal{N}_{A\to B}\!\left(\overline{\psi}_{RA}\right)\right].
\end{equation}

The average error probability of a classical communication protocol is less than the worst-case error probability. Therefore,
\begin{equation}
	\overline{p}_{\operatorname{err}}\!\left(\mathcal{E}_{M\to A},\mathcal{N}_{A\to B},\left\{\Lambda^m_B\right\}_{m\in \mathcal{M}}\right) \le p_{\operatorname{err}}\!\left(\mathcal{E}_{M\to A},\mathcal{N}_{A\to B},\left\{\Lambda^m_B\right\}_{m\in \mathcal{M}}\right) \le \varepsilon,
\end{equation}
and consequently,
\begin{equation}\label{eq:N_avg_st_accept_ge_eps}
    \operatorname{Tr}\!\left[\Omega_{RB}\mathcal{N}_{A\to B}\!\left(\overline{\psi}_{RA}\right)\right] = 1- \overline{p}_{\operatorname{err}} \ge 1-\varepsilon.
\end{equation}

Now consider a trace and replace channel $\mathcal{R}^{\sigma}_{A\to B}(\cdot) \coloneqq \operatorname{Tr}\!\left[\cdot\right]\sigma_B$, where $\sigma_B$ is an arbitrary quantum state. We find that
\begin{align}
    \operatorname{Tr}\!\left[\Omega_{RB}\mathcal{R}^{\sigma}_{A\to B}\!\left(\overline{\psi}_{RA}\right)\right] &= \operatorname{Tr}\!\left[\left(\sum_{m\in \mathcal{M}}\Theta^m_R\otimes \Lambda^m_B\right)\!\left(\overline{\rho}^T_R\otimes \sigma_B\right)\right]\\
    &= \sum_{m\in \mathcal{M}}\frac{1}{|\mathcal{M}|} \operatorname{Tr}\!\left[\left(\left(\overline{\rho}_R^{-\frac{1}{2}}\rho^m_R \overline{\rho}_R^{-\frac{1}{2}}\right)^T\otimes \Lambda^m_B\right)\!\left(\overline{\rho}^T_R\otimes \sigma_B\right)\right]\\
    &= \sum_{m \in \mathcal{M}}\frac{1}{|\mathcal{M}|}\operatorname{Tr}\!\left[\left(\rho^m_R\right)^T\!\left(\overline{\rho}_R^{-\frac{1}{2}}\right)^T\overline{\rho}_R^T\!\left(\overline{\rho}_R^{-\frac{1}{2}}\right)^T\otimes \left(\Lambda^m_B\sigma_B\right)\right]\\
    &= \sum_{m \in \mathcal{M}}\frac{1}{|\mathcal{M}|}\operatorname{Tr}\!\left[\left(\rho^m_R\right)^T\otimes \left(\Lambda^m_B\sigma_B\right)\right]\\
    &= \sum_{m \in \mathcal{M}}\frac{1}{|\mathcal{M}|}\operatorname{Tr}\!\left[\left(\rho^m_R\right)^T\right]\operatorname{Tr}\!\left[ \left(\Lambda^m_B\sigma_B\right)\right]\\
    &= \sum_{m\in \mathcal{M}} \frac{1}{|\mathcal{M}|}\operatorname{Tr}\!\left[\Lambda^m_B\sigma_B\right]\\
    &= \frac{1}{|\mathcal{M}|},
\end{align}
where the second equality follows from the definition of $\operatorname{\Theta}^m_R$ given in~\eqref{eq:Theta_defn}, the third equality follows from the cyclicity of trace, the penultimate equality follows from the fact that $\left(\rho^m_R\right)^T$ has unit trace, and the final equality follows from the fact that $\sum_{m \in \mathcal{M}} \Lambda^m_B = I_B$. 

Let us recall the definition of the hypothesis-testing relative entropy restricted to $\operatorname{LO}$ measurements (see also \eqref{eq:class_cap_ub}):
\begin{equation}
    D^{\varepsilon,\operatorname{LO}}_H\!\left(\rho\Vert\sigma\right) \coloneqq -\log_2 \inf_{\left\{\Lambda, I - \Lambda\right\} \in \operatorname{LO}}\left\{\operatorname{Tr}\!\left[\Lambda\sigma\right]: \operatorname{Tr}\!\left[\Lambda\rho\right]\ge 1-\varepsilon\right\}.
\end{equation}
From~\eqref{eq:N_avg_st_accept_ge_eps}, it is clear that the POVM $\left\{\Omega_{RB}, I_{RB} - \Omega_{RB}\right\}$, with $\Omega_{RB}$ defined in~\eqref{eq:omega_defn}, is a valid choice of a POVM for the $\operatorname{LO}$-restricted hypothesis-testing relative entropy between $\mathcal{N}_{A\to B}\!\left(\overline{\psi}_{RA}\right)$ and $\mathcal{R}^{\sigma}_{A\to B}\!\left(\overline{\psi}_{RA}\right)$. Therefore,
\begin{align}
    D^{\varepsilon,\operatorname{LO}}_H\!\left(\mathcal{N}_{A\to B}\!\left(\overline{\psi}_{RA}\right)\middle\Vert\mathcal{R}^{\sigma}_{A\to B}\!\left(\overline{\psi}_{RA}\right)\right) &\ge -\log_2\operatorname{Tr}\!\left[\Omega_{RB}\mathcal{R}^{\sigma}_{A\to B}\!\left(\overline{\psi}_{RA}\right)\right]\\
    &= \log_2 |\mathcal{M}|. \label{eq:hypo_test_LO_ge_bits}
\end{align}
The inequality in~\eqref{eq:hypo_test_LO_ge_bits} holds for every state $\sigma_B \in \mathcal{S}(B)$. Therefore, for every classical communication protocol with an encoding channel $\mathcal{E}_{M\to A}$ and worst-case error probability less than or equal to $\varepsilon$, the number of bits transmitted through a single use of a channel $\mathcal{N}_{A\to B}$ is bounded from above as follows:
\begin{equation}
    \log_2|\mathcal{M}| \le \inf_{\sigma_B \in \mathcal{S}(B)}D^{\varepsilon,\operatorname{LO}}_H\!\left(\mathcal{N}_{A\to B}\!\left(\overline{\psi}_{RA}\right)\middle\Vert\mathcal{R}^{\sigma}_{A\to B}\!\left(\overline{\psi}_{RA}\right)\right),
\end{equation}
where $\overline{\psi}_{RA}$ is the canonical purification of the average encoded state, $\overline{\rho}_A$, defined in~\eqref{eq:avg_enc_state}. Since there exists a state $\overline{\rho}_A$ for every encoding channel $\mathcal{E}_{M\to A}$, we arrive at the following upper bound on the one-shot classical capacity of a channel $\mathcal{N}_{A\to B}$:
\begin{equation}
	C^{\varepsilon}\!\left(\mathcal{N}_{A\to B}\right) \le \sup_{\overline{\rho}_A\in \mathcal{S}(A)}\inf_{\sigma_B \in \mathcal{S}(B)}D^{\varepsilon,\operatorname{LO}}_H\!\left(\mathcal{N}_{A\to B}\!\left(\overline{\psi}_{RA}\right)\middle\Vert\mathcal{R}^{\sigma}_{A\to B}\!\left(\overline{\psi}_{RA}\right)\right).
\end{equation}
This concludes the proof.

\subsection{Proof of Theorem~\ref{prop:class_cap_ub_SDP}}\label{app:SDP_proof}

In this section, we derive the SDP computable upper bound 
on the one-shot classical capacity presented in Theorem~\ref{prop:class_cap_ub_SDP}. We closely follow the proof of~\cite[Proposition 24]{MW14}.

From the definition of the restricted hypothesis testing relative entropy, the upper bound on the one-shot classical capacity of a channel can be written as follows:
\begin{equation}
	C^{\varepsilon}\!\left(\mathcal{N}_{A\to B}\right) \le \sup_{\overline{\rho}_A\in \mathcal{S}(A)}\inf_{\sigma_B\in \mathcal{S}(B)} -\log_2\inf_{\left\{\Lambda, I-\Lambda\right\} \in \operatorname{LO}}\left\{\begin{array}{c}
		\operatorname{Tr}\!\left[\Lambda_{RB}\!\left(\operatorname{Tr}_A\!\left[\overline{\psi}_{RA}\right]\otimes \sigma_B\right)\right]:\\
		\operatorname{Tr}\!\left[\Lambda_{RB}\mathcal{N}_{A\to B}\!\left(\overline{\psi}_{RA}\right)\right]\ge 1-\varepsilon,\\
		\overline{\psi}_{RA} = \left(I_R\otimes \sqrt{\overline{\rho}_A}\right)\Gamma_{RA}\!\left(I_R\otimes \sqrt{\overline{\rho}_A}\right)
	\end{array}\right\}.
\end{equation}
Using the equality in~\eqref{eq:avg_pur_st} and the cyclicity of trace, we can rewrite the above inequality as follows:
\begin{equation}
	C^{\varepsilon}\!\left(\mathcal{N}_{A\to B}\right) \le  -\log_2\inf_{\overline{\rho}_A\in \mathcal{S}(A)}\sup_{\sigma_B\in \mathcal{S}(B)}\inf_{\left\{\Lambda, I-\Lambda\right\} \in \operatorname{LO}}\left\{\begin{array}{c}
		\operatorname{Tr}\!\left[\sqrt{\overline{\rho}_R}^T\Lambda_{RB}\sqrt{\overline{\rho}_R}^T\!\left(I_R\otimes \sigma_B\right)\right]:\\
		\operatorname{Tr}\!\left[\sqrt{\overline{\rho}_R}^T\Lambda_{RB}\sqrt{\overline{\rho}_R}^T\Gamma^{\mathcal{N}}_{RB}\right]\ge 1-\varepsilon
	\end{array}\right\}.
\end{equation}
The objective function of the above optimization is linear in both $\Lambda_{RB}$ and $\sigma_B$ and the sets over which they are being optimized are convex, which allows us to use Sion's minimax~\cite{Sion58} theorem to arrive at the following inequality:
\begin{align}
	C^{\varepsilon}\!\left(\mathcal{N}_{A\to B}\right) &\le  -\log_2\inf_{\overline{\rho}_A\in \mathcal{S}(A)}\inf_{\left\{\Lambda, I-\Lambda\right\} \in \operatorname{LO}}\sup_{\sigma_B\in \mathcal{S}(B)}\left\{\begin{array}{c}
		\operatorname{Tr}\!\left[\sqrt{\overline{\rho}_R}^T\Lambda_{RB}\sqrt{\overline{\rho}_R}^T\!\left(I_R\otimes \sigma_B\right)\right]:\\
		\operatorname{Tr}\!\left[\sqrt{\overline{\rho}_R}^T\Lambda_{RB}\sqrt{\overline{\rho}_R}^T\Gamma^{\mathcal{N}}_{RB}\right]\ge 1-\varepsilon
	\end{array}\right\}\\
	&= -\log_2\inf_{\overline{\rho}_A\in \mathcal{S}(A)}\inf_{\left\{\Lambda, I-\Lambda\right\} \in \operatorname{LO}}\left\{\begin{array}{c}
		\left\Vert\operatorname{Tr}_R\!\left[\sqrt{\overline{\rho}_R}^T\Lambda_{RB}\sqrt{\overline{\rho}_R}^T\right]\right\Vert_{\infty}:
		\operatorname{Tr}\!\left[\sqrt{\overline{\rho}_R}^T\Lambda_{RB}\sqrt{\overline{\rho}_R}^T\Gamma^{\mathcal{N}}_{RB}\right]\ge 1-\varepsilon
	\end{array}\right\},\label{eq:class_cap_ub_inf_norm}
\end{align} 
where $\Vert\cdot\Vert_{\infty}$ is the operator norm of its argument, also known as the $\infty$-norm. The above inequality follows from the fact that $\sup_{\rho_A\in\mathcal{S}(A)}\operatorname{Tr}\!\left[X_A\rho_A\right] = \Vert X_A\Vert_{\infty}$.

A bipartite two-outcome POVM $\left\{\Lambda_{AB}, I_{AB}- \Lambda_{AB}\right\}$ lies in the set $\operatorname{LO}$ if and only if there exist POVMs $\left\{E^x_A\right\}_{x\in \mathcal{X}}$ and $\left\{F^x_B\right\}_{x\in \mathcal{X}}$ such that $\Lambda_{AB} = \sum_{x\in \mathcal{X}}E^x_A\otimes F^x_B$. An arbitrary POVM in $\operatorname{LO}$ is $k$-PPT-extendible for every integer $k\ge 1$. Therefore, there exists a $(k+1)$-partite POVM $\left\{G^{y_1^k}_{AB_1^k}\right\}_{y_1^k \in \left\{0,1\right\}^k}$ such that 
\begin{equation}\label{eq:LO_POVM_ext_marg_cond}
	\sum_{y_2^k}G^{\left(0,y_2,\ldots,y_k\right)}_{AB_1^k} = \Lambda_{AB_1}\otimes I_{B_2^k}
\end{equation}
and the conditions in~\eqref{eq:k_ext_non_sig},~\eqref{eq:k_ext_perm}, and~\eqref{eq:k_ppt_ext_constraints} hold for $\left\{G^{y_1^k}_{AB_1^k}\right\}_{y_1^k \in \left\{0,1\right\}^k}$. Let $\overline{\rho}_A$ be an arbitrary quantum state. Let us define
\begin{equation}
	Q^{y_1^k}_{AB_1^k} \coloneqq \sqrt{\overline{\rho}_A}^T G^{y_1^k}_{AB_1^k}\sqrt{\overline{\rho}_A}^T\qquad \forall y_1^k \in \left\{0,1\right\}^k.
\end{equation}
Then the equality in~\eqref{eq:LO_POVM_ext_marg_cond} implies that
\begin{equation}\label{eq:Q_marg_cond}
	\sum_{y_2^k}Q^{\left(0,y_2,\ldots,y_k\right)}_{AB_1^k} = \sqrt{\overline{\rho}_A}^T\Lambda_{AB_1}\sqrt{\overline{\rho}_A}^T\otimes I_{B_2^k}.
\end{equation}
Since $\sum_{y_1^k}G^{y_1^k}_{AB_1^k} = I_{AB_1^k}$, we conclude that
\begin{equation}\label{eq:Q_POVM_cond}
	\sum_{y_1^k \in \left\{0,1\right\}^k}Q^{y_1^k}_{AB_1^k} = \overline{\rho}^T_A\otimes I_{B_1^k}.
\end{equation}
Furthermore, the conditions in~\eqref{eq:k_ext_non_sig},~\eqref{eq:k_ext_perm}, and~\eqref{eq:k_ppt_ext_constraints} imply that
\begin{alignat}{2}
	\sum_{y_k } Q^{y_1^k}_{AB_1^k} &= \frac{1}{|B_k|}\sum_{y_k }\operatorname{Tr}_{B_k}\!\left[Q^{y_1^k}_{AB_1^k}\right]\otimes I_{B_k} \qquad &&\forall y_1^{k-1} \in \left\{0,1\right\}^{k-1},\label{eq:Q_non_sig}\\
        W^{\pi}_{B_1^k}Q^{y_{\pi(1)}^{\pi(k)}}_{AB_1^k} &= Q^{y_1^k}_{AB_1^k}W^{\pi}_{B_1^k} &&\forall y_{1}^k\in \left\{0,1\right\}^k, \pi \in S_k,\label{eq:Q_perm_cov}
\end{alignat}
and
\begin{equation}\label{eq:Q_PPT_cond}
	T_{B_1^i}\!\left(Q^{y_1^k}_{AB_1^k}\right)\ge 0 \qquad \forall i \in [k], y_1^k\in \left\{0,1\right\}^k,
\end{equation}
respectively. Let $\mathcal{S}^{k\text{-PE}}\!\left(\overline{\rho}_A\right)$ denote the set of all $\left\{Q^{y_1^k}_{AB_1^k}\right\}_{y_1^k\in \{0,1\}^k}$ such that the conditions in~\eqref{eq:Q_POVM_cond}--\eqref{eq:Q_PPT_cond} hold. 
For every state $\overline{\rho}_A$ and POVM $\left\{\Lambda_{AB}, I_{AB}-\Lambda_{AB}\right\} \in \operatorname{LO}$, there exists a set of positive semidefinite operators $\left\{Q^{y_1^k}_{AB_1^k}\right\}_{y_1^k \in \{0,1\}^k} \in \mathcal{S}^{k\text{-PE}}\!\left(\overline{\rho}_A\right)$ such that the equality in~\eqref{eq:Q_marg_cond} holds. Hence, we conclude the following inequality:
\begin{align}
	&\inf_{\substack{\overline{\rho}_A\in \mathcal{S}(A)\\ \left\{\Lambda, I-\Lambda\right\}\in \operatorname{LO}}}\left\{\begin{array}{c}
		\left\Vert\operatorname{Tr}_R\!\left[\sqrt{\overline{\rho}_A}^T\Lambda_{AB}\sqrt{\overline{\rho}_A}^T\right]\right\Vert_{\infty}:\notag\\
		\operatorname{Tr}\!\left[\sqrt{\overline{\rho}_A}^T\Lambda_{AB}\sqrt{\overline{\rho}_A}^T\Gamma^{\mathcal{N}}_{AB}\right]\ge 1-\varepsilon
	\end{array}\right\} \\ 
	&\ge \inf_{\substack{\overline{\rho}_A\in \mathcal{S}(A)\\ \left\{Q^{y_1^k}_{AB_1^k}\right\}_{y_1^k\in \{0,1\}^k}\in \mathcal{S}^{k\text{-PE}}\!\left(\overline{\rho}_A\right)}}\left\{\begin{array}{c}
	\left\Vert\sum_{y_2^k}\operatorname{Tr}_A\!\left[Q^{\left(0,y_2,\ldots,y_k\right)}_{AB_1^k}\right]\right\Vert_{\infty}:\\ 
	\frac{1}{|B|^{k-1}}\sum_{y_2^k}\operatorname{Tr}\!\left[Q^{\left(0,y_2,\ldots,y_k\right)}_{AB_1^k}\!\left(\Gamma^{\mathcal{N}}_{AB_1}\otimes I_{B_2^k}\right)\right]\ge 1-\varepsilon	
	\end{array}
	\right\}\\
	&= \inf_{\substack{\overline{\rho}_A\in \mathcal{S}(A), \lambda \ge 0\\ \left\{Q^{y_1^k}_{AB_1^k}\right\}_{y_1^k\in \{0,1\}^k}\in \mathcal{S}^{k\text{-PE}}\!\left(\overline{\rho}_A\right)}}\left\{\begin{array}{c}
	\lambda :\\
	\sum_{y_2^k}\operatorname{Tr}_A\!\left[Q^{\left(0,y_2,\ldots,y_k\right)}_{AB_1^k}\right] \le \lambda I_{B_2^k},\\ 
	\frac{1}{|B|^{k-1}}\sum_{y_2^k}\operatorname{Tr}\!\left[Q^{\left(0,y_2,\ldots,y_k\right)}_{AB_1^k}\!\left(\Gamma^{\mathcal{N}}_{AB_1}\otimes I_{B_2^k}\right)\right]\ge 1-\varepsilon	
	\end{array}
	\right\},\label{eq:Q_final_SDP}
\end{align}
where the above equality follows from the following SDP formulation of the $\infty$-norm of a positive semidefinite operator:
\begin{equation}
	\left\Vert X\right\Vert_{\infty} = \inf_{\lambda\ge 0} \left\{\lambda: X\le \lambda I\right\}.
\end{equation}
Combining~\eqref{eq:class_cap_ub_inf_norm} and~\eqref{eq:Q_final_SDP}, we arrive at the semidefinite program to compute the upper bound on the one-shot classical capacity of a channel stated in Theorem~\ref{prop:class_cap_ub_SDP}.

\subsection{Efficient computability of the upper bound on \texorpdfstring{$n$}{n}-shot classical capacity of a channel}\label{app:efficiency_proof}

The SDP upper bound on the one-shot classical capacity of a channel given in Theorem~\ref{prop:class_cap_ub_SDP} can be used to obtain an upper bound on the $n$-shot classical capacity of a channel $\mathcal{N}_{A\to B}$ by simply computing the bound for $\mathcal{N}^{\otimes n}_{A\to B}$. However, the resulting SDP has $O(\operatorname{exp}(n))$ variables and $O(\operatorname{exp}(n))$ constraints. Here we use the methods developed in~\cite{FST22} (see also references therein) to show that, for constant $k$, the SDP upper bound on the $n$-shot classical capacity can be reduced to an SDP that can be computed in $O(\operatorname{poly}(n))$ time. 

Let us denote the joint system $AB_1^k$ by $X$. Let us denote the SDP presented in Theorem~\ref{prop:class_cap_ub_SDP} by $S_C(\mathcal{N},\varepsilon, k)$. As argued in the main text, if $\left\{\lambda, \rho_{A_1^n},\left\{Q^{y_1^k}_{X_1^n}\right\}_{y_1^k}\right\}$ is a feasible point of $S_C(\mathcal{N}^{\otimes n},\varepsilon, k)$, then $\left\{\lambda, \mathcal{T}_{A_1^n}\!\left(\rho_{A_1^n}\right),\left\{\mathcal{T}_{X_1^n}\!\left(Q^{y_1^k}_{X_1^n}\right)\right\}_{y_1^k}\right\}$ is also a feasible point of $S_C\!\left(\mathcal{N}^{\otimes n},\varepsilon, k\right)$, where
\begin{equation}
    \mathcal{T}_{X_1^n}(\cdot) \coloneqq \frac{1}{|S_n|}\sum_{\pi \in S_n}W^{\pi}_{X_1^n}(\cdot)\!\left(W^{\pi}_{X_1^n}\right)^{\dagger}
\end{equation}
and $W^{\pi}_{X_1^{ n}}$ is the unitary operator corresponding to the permutation $\pi$ in the symmetric group $S_n$. As such, the SDP $S_C\!\left(\mathcal{N}^{\otimes n},\varepsilon, k\right)$ yields the same value even if the variables $\rho_{A_1^n}$ and $Q^{y_1^k}_{X_1^n}$ are restricted to satisfy the symmetries $\rho_{A_1^n} = \mathcal{T}_{A_1^n}\!\left(\rho_{A_1^n}\right)$ and $Q^{y_1^k}_{X_1^n} = \mathcal{T}_{X_1^n}\!\left(Q^{y_1^k}_{X_1^n}\right)$, for every $y_1^k \in \{0,1\}^k$. We use these symmetries in the SDP to reduce it to a semidefinite program with $O(\operatorname{poly}(n))$ variables and $O(\operatorname{poly}(n))$ constraints.

\begin{theorem}\label{theo:eff_comp_SDP}
    For constant $k$, the semidefinite program $S_C\left(\mathcal{N}^{\otimes n},\varepsilon,k\right)$ can be transformed into an equivalent SDP with $O(\operatorname{poly}(n))$ variables and $O(\operatorname{poly}(n))$ constraints. Moreover, this transformation  can be performed in $O(\operatorname{poly}(n))$ time.
\end{theorem}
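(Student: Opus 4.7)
The plan is to carry out a Schur--Weyl symmetry reduction, following the strategy of \cite{FST22} and \cite[Lemma~11]{BDSW24}. Set $d_X \coloneqq |A|\cdot |B|^k$, which is constant when $k$ is fixed. As already observed in the main text, the Choi operator $\Gamma^{\mathcal{N}^{\otimes n}}$ is invariant under simultaneous permutations of the $n$ channel copies, so twirling $\rho_{A_1^n}$ by $\mathcal{T}_{A_1^n}$ and each $Q^{y_1^k}_{X_1^n}$ by $\mathcal{T}_{X_1^n}$ preserves feasibility and the objective value of $S_C(\mathcal{N}^{\otimes n},\varepsilon,k)$. I would therefore restrict the optimization to variables that commute with the $S_n$-action across the $n$ channel uses.

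Next, I would invoke the Schur--Weyl decomposition
\begin{equation}
    \mathcal{H}_X^{\otimes n} \;\cong\; \bigoplus_{\lambda \vdash n,\; \ell(\lambda)\leq d_X} W_\lambda^{(X)} \otimes V_\lambda,
\end{equation}
and its analogue for $\mathcal{H}_A^{\otimes n}$, where $W_\lambda^{(X)}$ denotes the Weyl module and $V_\lambda$ the Specht module. An $S_n$-invariant operator then has the block-diagonal form $\bigoplus_\lambda M_\lambda \otimes I_{V_\lambda}$ with $M_\lambda \in \mathcal{L}(W_\lambda^{(X)})$. Since the number of Young diagrams with at most $d_X$ rows is bounded by $\binom{n+d_X-1}{d_X-1} = O(\operatorname{poly}(n))$, and each $\dim W_\lambda^{(X)}$ is likewise polynomial in $n$ by the hook-length formula, every $S_n$-invariant variable is parametrized by $O(\operatorname{poly}(n))$ scalars. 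The SDP has only $2^k+2$ such variables (the state $\rho_{A_1^n}$, one $Q^{y_1^k}_{X_1^n}$ per $y_1^k \in \{0,1\}^k$, and the scalar $\lambda$), which is constant in $n$.

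Third, I would translate each of the constraints \eqref{eq:SDP_POVM}--\eqref{eq:SDP_type1_err} into the block-diagonal picture. Positivity \eqref{eq:SDP_pos} and the operator inequality \eqref{eq:SDP_max_eigval} decouple across blocks. The normalization \eqref{eq:SDP_POVM}, the $S_k$-covariance \eqref{eq:SDP_perm_cov}, and the PPT constraints \eqref{eq:SDP_PPT} all act on subsystems \emph{within} each channel copy, so they commute with the cross-copy $S_n$-twirl and descend to linear and semidefinite constraints on each $M_\lambda$ separately. The non-signaling constraint \eqref{eq:SDP_non_sig} and the error-probability constraint \eqref{eq:SDP_type1_err} involve partial traces over channel copies and a pairing with the $S_n$-invariant operator $\Gamma^{\mathcal{N}^{\otimes n}}$; these reduce block-wise through the standard branching rules for $S_n\supset S_{n-1}$ and the corresponding branching for $U(d_X)$.

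The main obstacle I anticipate is producing explicit, polynomial-time-computable matrix representations of the partial transpose, the partial trace across copies, and the Choi pairing in the Schur basis, which reduces to computing Clebsch--Gordan and branching coefficients for $S_n$ and $U(d_X)$; these are known to be computable in $O(\operatorname{poly}(n))$ time when $d_X$ is constant, via the symmetric-subspace machinery underlying the proof of \cite[Lemma~11]{BDSW24}. Once these ingredients are in place, the resulting SDP has $O(\operatorname{poly}(n))$ variables and $O(\operatorname{poly}(n))$ constraints, is assembled in $O(\operatorname{poly}(n))$ time, and by construction has the same optimal value as the $S_n$-symmetric restriction of $S_C(\mathcal{N}^{\otimes n},\varepsilon,k)$, yielding the theorem.
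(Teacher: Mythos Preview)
Your proposal is correct and follows essentially the same $S_n$-symmetry reduction as the paper, which in turn follows \cite{FST22} and \cite{BDSW24}; the paper works concretely with the orbit-type basis $\{C^r\}$ of the commutant and the block-diagonalization map $\phi$ (its property~P.5), which is precisely the Schur--Weyl decomposition you invoke directly. One small caveat: the partial transpose $T_{B_1^i}^{\otimes n}$ and the embedding $\rho_{A_1^n}\mapsto \rho_{A_1^n}\otimes I$ do commute with the cross-copy $S_n$-action but are not conjugations by elements of $GL(d_X)$, so they need not act diagonally on the Schur--Weyl blocks and the corresponding constraints do not literally ``descend to each $M_\lambda$ separately''; they do, however, become polynomial-sized linear maps on the commutant (as the paper makes explicit via the orbit basis and the functions $f_{T_l}$, $f^B_{\operatorname{Tr}}$, etc.), so your overall conclusion is unaffected.
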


\textbf{Proof of Theorem~\ref{theo:eff_comp_SDP}:} Here we discuss the proof of Theorem~\ref{theo:eff_comp_SDP}, and the final SDP with $O(\operatorname{poly}(n))$ variables and $O(\operatorname{poly}(n))$ constraints is given in~\eqref{eq:eff_comp_SDP}.

Let us denote the set of all permutationally covariant linear operators acting on a Hilbert space $\mathcal{H}_{X^{\otimes n}}$ as follows:
\begin{equation}
    \operatorname{End}^{S_n}\!\left(\mathcal{H}_{X^{\otimes n}}\right) \coloneqq \left\{T\in \mathcal{L}\!\left(\mathcal{H}_{X^{\otimes n}}\right): W^{\pi}_{X_1^{n}}T = TW^{\pi}_{X_1^{ n}} \quad \forall \pi \in S_n \right\}.
\end{equation}
Clearly, $\mathcal{T}_{A_1^n}(\rho_{A_1^n}) \in \operatorname{End}^{S_n}\!\left(\mathcal{H}_{A^{\otimes n}}\right)$ and $\mathcal{T}_{X_1^n}(G^{y_1^k}_{X_1^n}) \in \operatorname{End}^{S_n}\!\left(\mathcal{H}_{X^{\otimes n}}\right)$ for all linear operators $\rho_{A_1^n}$ and $G^{y_1^k}_{X_1^n}$. Let $\left\{|i\rangle\right\}_{i=1}^{|X|^n}$ be the standard basis of $\mathcal{H}_{X^{\otimes n}}$. For an arbitrary pair of basis elements $|i\rangle_{X_1^n}$ and $|j\rangle_{X_1^n}$, consider the following operator:
\begin{equation}\label{eq:Cr_defn}
    C^r_{X_1^n} \coloneqq \sum_{\pi \in S_n} W^{\pi}_{X_1^n}|i\rangle\!\langle j|_{X_1^n}\left(W^{\pi}_{X_1^n}\right)^{\dagger},
\end{equation}
where the label $r$ is uniquely determined by the pair $\left(|i\rangle_{X_1^n},|j\rangle_{X_1^n}\right)$ (note that it can be equivalently determined by the pair  $\left(W^{\pi}|i\rangle_{X_1^n},W^{\pi}|j\rangle_{X_1^n}\right)$). The set $\left\{C^r_{X_1^n}\right\}_{r=1}^{m_X}$ forms an orthogonal basis for $\operatorname{End}^{S_n}\!\left(\mathcal{H}_{X^{\otimes n}}\right)$, where $m_X \le (n+1)^{|X|^2}$. Let us note some important properties of this set that are relevant to our discussion.

\begin{proposition}[\cite{FST22}]\label{prop:basis_prop}
    The elements of the set $\left\{C^r_{X_1^n}\right\}_{r\in [m_X]}$, defined in~\eqref{eq:Cr_defn} have the following properties:
    \begin{enumerate}[label = \textbf{P.\arabic*}, ref = P.\arabic*]
        \item \label{it:computable_trace} $\operatorname{Tr}\!\left[\left(C^r_{X_1^n}\right)^{\dagger}C^r_{X_1^n}\right]$ can be computed in $O(\operatorname{poly}(n))$.
        \item \label{it:computable_coeff} Let $G_{X_1^n}$ be an arbitrary operator in $\operatorname{End}^{S_n}\!\left(\mathcal{H}_{X_1^n}\right)$ such that 
        \begin{equation}
            G_{X_1^n} = \sum_{r=1}^{m_X}\alpha_r C^r_{X_1^n}.
        \end{equation}
        Then $\alpha_r$ can be computed in $O(\operatorname{poly}(n))$ for every $r\in [m_{X}]$.
        \item \label{it:computable_index} For a given pair $\left(|i\rangle_{X_1^n},|j\rangle_{X_1^n}\right)$, the unique integer $r\in [m_{X}]$ such that 
        \begin{equation}
            C^r_{X_1^n} = \sum_{\pi \in S_n}W^{\pi}_{X_1^n}|i\rangle\!\langle j|\left(W^{\pi}_{X_1^n}\right)^{\dagger}
        \end{equation}
        can be determined in $O(\operatorname{poly}(n))$ time.
        \item \label{it:computable_marg} Let $\mathcal{H}_{Z} = \mathcal{H}_X\otimes \mathcal{H}_Y$. Then for every $r\in [m_Z]$, there exists an integer $t\in [m_X]$ such that
        \begin{equation}
            \operatorname{Tr}_{Y^{\otimes n}}\!\left[C^r_{Z_1^n}\right] = \omega_t C^t_{X_1^n}.
        \end{equation}
        Moreover, the integer $t\in [m_X]$ can be computed in $O(\operatorname{poly}(n))$, and the same is true for $\omega_t$.
        \item \label{it:computable_block} There exists a linear unital map $\phi$ that acts $C^r_{X_1^n}$ as follows:
        \begin{equation}
            \phi\!\left(C^r_{X_1^n}\right) = \bigoplus_{i=1}^{t_X}\left\llbracket\phi(C^r)\right\rrbracket_i,
        \end{equation}
        where $\left\llbracket\phi(C^r)\right\rrbracket_i$ is a complex matrix of size $p_X \times p_X$, $p_X \le (n+1)^{|X|(|X|-1)/2}$, and $t_X \le (n+1)^{|X|}$. The matrix $\left\llbracket\phi(C^r)\right\rrbracket_i$ can be computed in $O(\operatorname{poly}(n))$ for every $i \in [t_X]$. Furthermore, for all $G_{X_1^n}\in \operatorname{End}^{S_n}\!\left(\mathcal{H}_{X_1^n}\right)$, 
        \begin{equation}
            \phi\!\left(G_{X_1^n}\right) \ge 0 \iff G_{X_1^n} \ge 0.
        \end{equation}
    \end{enumerate}
\end{proposition}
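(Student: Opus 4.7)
The plan is to exploit the $S_n$ twirl argument from the main text to restrict all variables to the permutation-invariant algebra $\operatorname{End}^{S_n}$, then expand everything in the basis $\{C^r\}$ from Proposition~\ref{prop:basis_prop}. Writing $\rho_{A_1^n} = \sum_{r=1}^{m_A} a_r C^r_{A_1^n}$ and, for each $y_1^k \in \{0,1\}^k$, $Q^{y_1^k}_{X_1^n} = \sum_{r=1}^{m_X} q^{y_1^k}_r C^r_{X_1^n}$ with $X = AB_1^k$, the optimization reduces to one over the scalars $\lambda$, $\{a_r\}$, and $\{q^{y_1^k}_r\}$. For constant $k$, $|A|$, and $|B|$ we have $|X| = |A||B|^k$ constant, hence $m_A \le (n+1)^{|A|^2}$ and $m_X \le (n+1)^{|X|^2}$ are both $O(\operatorname{poly}(n))$, and since there are only $2^k$ outcome tuples, the total scalar variable count is $O(\operatorname{poly}(n))$.

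Next, I would rewrite each linear constraint of $S_C(\mathcal{N}^{\otimes n},\varepsilon,k)$ in this basis. The normalization~\eqref{eq:SDP_POVM}, non-signaling~\eqref{eq:SDP_non_sig}, and internal permutation-covariance~\eqref{eq:SDP_perm_cov} (on the systems $(B_1^k)_1^n$) are linear equalities between operators in $\operatorname{End}^{S_n}$, because the internal $B_1^k$-permutations commute with the $S_n$-twirl on copies. Expanding both sides in the appropriate $C^r$ basis and matching coefficients through~\ref{it:computable_coeff} produces $O(\operatorname{poly}(n))$ scalar equations, with partial traces and identity tensorings handled by~\ref{it:computable_marg}. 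The type-I-error bound~\eqref{eq:SDP_type1_err} becomes a linear inequality on $\{q^{y_1^k}_r\}$ after expanding the permutation-invariant operator $\Gamma^{\mathcal{N}^{\otimes n}}_{(AB_1)_1^n}\otimes I_{(B_2^k)_1^n}$ in its $C^s$ basis (using~\ref{it:computable_coeff}) and evaluating the cross-traces $\operatorname{Tr}[(C^r)^\dagger C^s]$ via~\ref{it:computable_trace}; identifying the basis indices produced by the tensor-power structure of $\Gamma^{\mathcal{N}^{\otimes n}}$ uses~\ref{it:computable_index}.

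For the PSD constraints~\eqref{eq:SDP_pos},~\eqref{eq:SDP_max_eigval}, and the PPT constraints~\eqref{eq:SDP_PPT}, I would invoke the unital map $\phi$ from~\ref{it:computable_block}. The condition $Q^{y_1^k}_{X_1^n} \ge 0$ is equivalent to the block-diagonal PSD inequality $\bigoplus_{i=1}^{t_X} \sum_r q^{y_1^k}_r \llbracket\phi(C^r_{X_1^n})\rrbracket_i \ge 0$ on matrices of size at most $p_X \le (n+1)^{|X|(|X|-1)/2}$, with $t_X \le (n+1)^{|X|}$ blocks, each polynomial in $n$. The constraint~\eqref{eq:SDP_max_eigval}, after applying~\ref{it:computable_marg} to take the partial trace over $A_1^n$, becomes a PSD inequality on a polynomial-sized element of the smaller invariant algebra and is reduced in the same way. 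For~\eqref{eq:SDP_PPT}, I would observe that $T_{(B_1^i)_1^n} = T_{B_1^i}^{\otimes n}$ commutes with the $S_n$-action on copies, so it carries $\operatorname{End}^{S_n}(\mathcal{H}_X^{\otimes n})$ onto the invariant algebra of an isomorphic Hilbert space; applying~\ref{it:computable_block} there converts each PPT constraint into a polynomial-sized block-diagonal PSD condition, still linear in $\{q^{y_1^k}_r\}$.

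The main obstacle I anticipate is the PPT step: although partial transpose preserves permutation invariance, expanding $T_{(B_1^i)_1^n}(C^r_{X_1^n})$ efficiently in the output basis (so that~\ref{it:computable_block} applies) requires tracking how partial transpose permutes basis indices, which should follow by the same counting and Schur--Weyl arguments underlying~\ref{it:computable_marg} and~\ref{it:computable_index}, but the bookkeeping must be set up carefully. Once this is in place, every transformation step has $O(\operatorname{poly}(n))$ cost by the listed properties, the resulting SDP has $O(\operatorname{poly}(n))$ scalar variables and $O(\operatorname{poly}(n))$ scalar and $\operatorname{poly}(n)$-sized PSD constraints, and by the twirl argument its optimal value equals that of $S_C(\mathcal{N}^{\otimes n},\varepsilon,k)$, establishing the claim.
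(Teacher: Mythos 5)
Your proposal does not address the statement you were asked to prove. The statement is Proposition~\ref{prop:basis_prop} itself---the structural and computability properties \ref{it:computable_trace}--\ref{it:computable_block} of the orbit basis $\left\{C^r_{X_1^n}\right\}_{r\in[m_X]}$ of $\operatorname{End}^{S_n}\!\left(\mathcal{H}_{X^{\otimes n}}\right)$. What you have written is instead a sketch of the downstream result (Theorem~\ref{theo:eff_comp_SDP}, the reduction of $S_C(\mathcal{N}^{\otimes n},\varepsilon,k)$ to a polynomially sized SDP), and in carrying it out you repeatedly invoke \ref{it:computable_trace}, \ref{it:computable_coeff}, \ref{it:computable_index}, \ref{it:computable_marg}, and \ref{it:computable_block} as black boxes. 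Relative to the assigned statement this is circular: you are using the proposition to prove the proposition. Note also that the paper itself does not prove this proposition but imports it from~\cite{FST22}, so the content you were expected to supply is precisely the argument behind those five properties.

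A genuine proof would have to work with the combinatorics of the $S_n$-orbits. The orbit of a pair $\left(|i\rangle_{X_1^n},|j\rangle_{X_1^n}\right)$ under the diagonal permutation action is labelled by the frequency matrix counting, for each pair of single-copy letters $(a,b)$, how many positions $\ell\in[n]$ satisfy $(i_\ell,j_\ell)=(a,b)$; there are at most $(n+1)^{|X|^2}$ such matrices, each orbit size is a multinomial coefficient, and $\operatorname{Tr}\!\left[\left(C^r\right)^{\dagger}C^r\right]$ is determined by that orbit size---this yields \ref{it:computable_trace} and \ref{it:computable_index}, and \ref{it:computable_coeff} follows by reading off a single matrix entry of $G_{X_1^n}$ on an orbit representative and dividing by the computed normalization. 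Property \ref{it:computable_marg} follows by tracing out the $Y$ letters of an orbit representative and re-identifying the resulting frequency matrix on $X$, which is only nonzero when the $Y$-parts of $|i\rangle$ and $|j\rangle$ agree letterwise. Property \ref{it:computable_block} is the substantive one: it is the block decomposition of the commutant of the diagonal $S_n$-action furnished by Schur--Weyl duality, with blocks indexed by Young diagrams with at most $|X|$ rows (hence at most $(n+1)^{|X|}$ of them, each multiplicity space of dimension at most $(n+1)^{|X|(|X|-1)/2}$), together with an explicit polynomial-time computation of the image of each $C^r$ in each block. None of this appears in your proposal, so as a proof of the stated proposition it is missing essentially all of the required content.
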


Recall that the SDP $S_C\!\left(\mathcal{N}^{\otimes n},\varepsilon, k\right)$ yields the same value even if the variables $\rho_{A_1^n}$ and $Q^{y_1^k}_{X_1^n}$ are restricted to be in $\operatorname{End}^{S_n}\!\left(\mathcal{H}_{A^{\otimes n}}\right)$ and $\operatorname{End}^{S_n}\!\left(\mathcal{H}_{X^{\otimes n}}\right)$, respectively, for every $y_1^k \in \{0,1\}^k$. As such, we can set
\begin{align}
    \rho_{A_1^n} &= \sum_{r\in [m_A]}\alpha_r C^r_{A_1^n},\\
    Q^{y_1^k}_{X_1^n} &= \sum_{t\in [m_X]}\beta^{y_1^k}_t C^t_{X_1^n},
\end{align}
and write the SDP in terms of the new variables $\left(\lambda, \left\{\alpha_r\right\}_{r\in [m_A]},\left\{\beta^{y_1^k}_r\right\}_{r\in [m_X],y_1^k\in \{0,1\}^k}\right)$, where $m_A \le (n+1)^{|A|^2}$, $m_X \le (n+1)^{|X|^2}$, and $\alpha_r, \beta^{y_1^k}_t \in \mathbb{C}$ for every $r\in [m_A]$, $t \in [m_X]$, and $y_1^k\in \{0,1\}^k$,. 

Let us now rewrite the constraints in the SDP $S_C\!\left(\mathcal{N}^{\otimes n},\varepsilon, k\right)$ in terms of the new variables. In what follows, we add another index in system $B$'s label to identify one out of the $n$ copies. As such,
\begin{equation}
    X_i = A_iB_{i,1}B_{i,2}\cdots B_{i,k}.
\end{equation}
We also introduce the following notation for convenience:
\begin{equation}
    B_{(i,j)}^{(l,m)} \coloneqq B_{i,j}B_{i,j+1}\cdots B_{i,j+m}B_{i+1,j}B_{i+1,j+1}\cdots B_{l,m}.
\end{equation}

\subsubsection{Constraint in \texorpdfstring{\eqref{eq:SDP_POVM}}{(22)}}\label{sec:SDP_POVM_constraints_red}

The constraint in~\eqref{eq:SDP_POVM} can be written as follows:
\begin{align}
    \sum_{y_1^k} Q^{y_1^k}_{X_1^n} &= \rho_{A_1^n}\otimes I_{B_{(1,1)}^{(n,k)}}\\
    \implies \sum_{y_1^k} \sum_{t\in [m_X]} \beta^{y_1^k}_t C^t_{X_1^n} &= \sum_{r\in [m_A]} \alpha_r C^r_{A_1^n}\otimes I_{B_{(1,1)}^{(n,k)}}.
\end{align}
Note that $C^r_{A_1^n}\otimes I_{B_{(1,1)}^{(n,k)}} \in \operatorname{End}^{S_n}\!\left(\mathcal{H}_{X^{\otimes n}}\right)$. Therefore, there exists a set of complex numbers, $\left\{\gamma^r_v\right\}_{v\in [m_X], r\in [m_A]}$, such that
\begin{equation}
    C^r_{A_1^n}\otimes I_{B_{(1,1)}^{(n,k)}} = \sum_{v\in [m_X]} \gamma^r_v C^v_{X_1^n} \quad \forall r\in [m_A].
\end{equation}
In fact,
\begin{align}
    \gamma^r_v &= \frac{\operatorname{Tr}\!\left[\left(C^v_{X_1^n}\right)^{\dagger}\!\left(C^r_{A_1^n}\otimes I_{B_{(1,1)}^{(n,k)}}\right)\right]}{\operatorname{Tr}\!\left[\left(C^v_{X_1^n}\right)^{\dagger}C^v_{X_1^n}\right]}\\
    &= \frac{\operatorname{Tr}\!\left[\left(\left(C^r_{A_1^n}\right)^{\dagger}\otimes I_{B_{(1,1)}^{(n,k)}}\right)C^v_{X_1^n}\right]}{\operatorname{Tr}\!\left[\left(C^v_{X_1^n}\right)^{\dagger}C^v_{X_1^n}\right]}\\
    &= \frac{\operatorname{Tr}\!\left[\left(C^r_{A_1^n}\right)^{\dagger}\operatorname{Tr}_{B_{(1,1)}^{(n,k)}}\!\left[C^v_{X_1^n}\right]\right]}{\operatorname{Tr}\!\left[\left(C^v_{X_1^n}\right)^{\dagger}C^v_{X_1^n}\right]}\label{eq:gamma_r_v_value}
\end{align}
where the first equality follows from the fact that $\left\{C^r_{X_1^n}\right\}_{r\in [m_X]}$ is an orthogonal set of operators and the final equality follows from~\ref{it:computable_marg}. Let us define an integer-valued function $f^B_{\operatorname{Tr}}$ such that
\begin{equation}
    \operatorname{Tr}_{B_{(1,1)}^{(n,k)}}\!\left[C^{v}_{X_1^n}\right] = \omega_{f^B_{\operatorname{Tr}}(v)}C^{f^B_{\operatorname{Tr}}(v)}_{A_1^n}.
\end{equation}
The existence of such a function is guaranteed by the statement in~\ref{it:computable_marg}. Then the equality in~\eqref{eq:gamma_r_v_value} can be written as follows:
\begin{equation}
    \gamma^r_v = \omega_{f^B_{\operatorname{Tr}}(r)}\frac{\operatorname{Tr}\!\left[\left(C^r_{A_1^n}\right)^{\dagger}C^r_{A_1^n}\right]}{\operatorname{Tr}\!\left[\left(C^v_{X_1^n}\right)^{\dagger}C^v_{X_1^n}\right]}\delta_{r,f^B_{\operatorname{Tr}}(v)},
\end{equation}
where $\delta_{a,b}$ is the Kronecker delta function between $a$ and $b$. Recall from~\ref{it:computable_marg} that both $f^B_{\operatorname{Tr}}(v)$ and $\omega_{f^B_{\operatorname{Tr}}(v)}$ can be computed in $O(\operatorname{poly}(n))$, which implies that $\gamma^r_v$ can be computed in $O(\operatorname{poly}(n))$. The constraint in~\eqref{eq:SDP_POVM} can hence be written as follows:
\begin{align}
    \sum_{y_1^k} \sum_{t\in [m_X]} \beta^{y_1^k}_t C^t_{X_1^n} = \sum_{r\in [m_A]}\sum_{v\in [m_X]} \alpha_r\gamma^r_v C^v_{X_1^n} &= \sum_{v\in [m_X]} \alpha_{f^B_{\operatorname{Tr}}(v)}\gamma^{f^B_{\operatorname{Tr}}(v)}_v C^v_{X_1^n}\\
    \iff \sum_{y_1^k} \beta^{y_1^k}_v &= \alpha_{f^B_{\operatorname{Tr}}(v)}\gamma^{f^B_{\operatorname{Tr}}(v)}_v \qquad \forall v\in [m_X].\label{eq:SDP_POVM_reduced}
\end{align}

\subsubsection{Constraint in \texorpdfstring{\eqref{eq:SDP_pos}}{23}}

The constraint in~\eqref{eq:SDP_POVM_reduced} for $Q^{y_1^k}_{X_1^n} \in \operatorname{End}^{S_n}\!\left(\mathcal{H}_{X^{\otimes n}}\right)$ can be written as follows:
\begin{align}
    Q^{y_1^k}_{X_1^n} &\ge 0 \qquad \forall y_1^k\in \left\{0,1\right\}^k\\
    \iff \phi\!\left(Q^{y_1^k}_{X_1^n}\right) &\ge 0 \qquad \forall y_1^k\in \left\{0,1\right\}^k\\
    \iff \sum_{r\in [m_X]} \beta^{y_1^k}_r\phi\!\left(C^r_{X_1^n}\right) &\ge 0 \qquad \forall y_1^k\in \left\{0,1\right\}^k\\
    \iff \sum_{r\in [m_X]} \beta^{y_1^k}_r\bigoplus_{i=1}^{t_X} \left\llbracket\phi\!\left(C^r\right)\right\rrbracket_i &\ge 0 \qquad \forall y_1^k\in \left\{0,1\right\}^k\\
    \iff \sum_{r\in [m_X]} \beta^{y_1^k}_r\left\llbracket\phi\!\left(C^r\right)\right\rrbracket_i &\ge 0 \qquad \forall y_1^k\in \left\{0,1\right\}^k, i\in [t_X],\label{eq:SDP_pos_reduced}
\end{align}
where all the above inequalities follow from~\ref{it:computable_block}. The integer $t_X \le (n+1)^{|X|}$. Furthermore, the matrix $\left\llbracket\phi\!\left(C^r\right)\right\rrbracket_i$ can be computed in $O(\operatorname{poly}(n))$ time, which implies that the condition in~\eqref{eq:SDP_pos_reduced} can be tested for an arbitrary set of complex numbers $\left\{\beta^{y_1^k}_{r}\right\}_{r\in [m_X], y_1^k\in \{0,1\}^k}$ in $O(\operatorname{poly}(n))$ time.

\subsubsection{Constraint in \texorpdfstring{\eqref{eq:SDP_perm_cov}}{24}}

Recall that system $X_i$ is composed of systems $A_i, B_{i,1}, B_{i,2},\ldots, B_{i,k}$. The permutation channel $\mathcal{W}^{\pi}$ in~\eqref{eq:SDP_perm_cov} then takes the state on system $B_{i,j}$ to $B_{i,\pi(j)}$ for every $i \in [n]$. The constraint in~\eqref{eq:SDP_perm_cov} can be hence be written as follows:
\begin{align}
    \left(\bigotimes_{i=1}^n\mathcal{W}^{\pi}_{B_{(i,1)}^{(i,k)}}\right)\!\left(Q^{y_{\pi(1)}^{\pi(k)}}_{X_1^n}\right) = Q^{y_{1}^{k}}_{X_1^n} \qquad \forall \pi \in S_k\\
    \iff \sum_{r\in [m_X]}\beta^{y_{\pi(1)}^{\pi(k)}}_r\left(\bigotimes_{i=1}^n\mathcal{W}^{\pi}_{B_{(i,1)}^{(i,k)}}\right)\!\left(C^r_{X_1^n}\right) = \sum_{r\in [m_X]}\beta^{y_{1}^{k}}_r C^r_{X_1^n}\qquad \forall \pi \in S_k.\label{eq:SDP_perm_cov_distribute_perm}
\end{align}
Recall the definition of $C^r_{X_1^n}$ given in~\eqref{eq:Cr_defn}. A permutation $\sigma\in S_n$ of systems  $X_1, X_2, \ldots X_n$ takes the state on system $B_{i,j}$ to $B_{\sigma(i),j}$ for every $j\in [k]$. Therefore, we can write
\begin{equation}
    C^r_{X_1^n} = \sum_{\sigma \in S_n}\left(\bigotimes_{j=1}^k \mathcal{W}^{\sigma}_{B_{(1,j)}^{(n,j)}}\right)\left(|i\rangle\!\langle j|\right)
\end{equation}
for some $\left(|i\rangle,|j\rangle\right)$ that uniquely identifies $C^r_{X_1^n}$. Note that the order in which the permutations $\pi$ and $\sigma$ are applied does not matter in this case. As such, the state on the system $B_{i,j}$ becomes the state of the system $B_{\sigma(i),\pi(j)}$ regardless of whether the permutation $\pi$ is applied first or $\sigma$ is. Therefore, the channels $\bigotimes_{i=1}^n\mathcal{W}^{\pi}_{B_{i,1}\cdots B_{i,k}}$ and $\bigotimes_{j=1}^k \mathcal{W}^{\sigma}_{B_{1,j}\cdots B_{n,j}}$ commute, and we can write the equality in~\eqref{eq:SDP_perm_cov_distribute_perm} as follows:
\begin{equation}\label{eq:SDP_perm_cov_distribute_perm_2}
    \sum_{r\in [m_X]}\beta^{y_{\pi(1)}^{\pi(k)}}_r\sum_{\sigma \in S_n}\left(\bigotimes_{j=1}^k \mathcal{W}^{\sigma}_{B_{(1,j)}^{(n,j)}}\right)\left(\left(\bigotimes_{i=1}^n\mathcal{W}^{\pi}_{B_{(i,1)}^{(i,k)}}\right)\!\left(|i\rangle\!\langle j|\right)\right) = \sum_{r\in [m_X]}\beta^{y_{1}^{k}}_r C^r_{X_1^n}\qquad \forall \pi \in S_k.
\end{equation}
Let us define
\begin{align}
    |i_{\pi}\rangle &\coloneqq \left(\bigotimes_{i=1}^nW^{\pi}_{B_{(i,1)}^{(i,k)}}\right)|i\rangle,\\
    |j_{\pi}\rangle &\coloneqq \left(\bigotimes_{i=1}^nW^{\pi}_{B_{(i,1)}^{(i,k)}}\right)|j\rangle,
\end{align}
which are also elements of the standard basis of $\mathcal{H}_{X_1^n}$. The equality in~\eqref{eq:SDP_perm_cov_distribute_perm_2} can then be written as follows:
\begin{align}
    \sum_{r\in [m_X]}\beta^{y_{\pi(1)}^{\pi(k)}}_r\sum_{\sigma \in S_n}\left(\bigotimes_{j=1}^k \mathcal{W}^{\sigma}_{B_{(1,j)}^{(n,j)}}\right)\left(|i_{\pi}\rangle\!\langle j_{\pi}|\right) &= \sum_{r\in [m_X]}\beta^{y_{1}^{k}}_r C^r_{X_1^n}\qquad \forall \pi \in S_k\\
    \iff \sum_{r\in [m_X]}\beta^{y_{\pi(1)}^{\pi(k)}}_rC^t_{X_1^n} &= \sum_{r\in [m_X]}\beta^{y_{1}^{k}}_r C^r_{X_1^n}\qquad \forall \pi \in S_k,\label{eq:SDP_perm_cov_red_1}
\end{align}
where the integer $t$ is uniquely identified by the vectors $\left(|i_{\pi}\rangle, |j_{\pi}\rangle\right)$. Let us define a function $f_{\pi}$ such that
\begin{equation}
    \left(\bigotimes_{i=1}^n\mathcal{W}^{\pi}_{B_{(i,1)}^{(i,k)}}\right)\!\left(C^r_{X_1^n}\right) = C^{f_{\pi}\!(r)}_{X_1^n} \qquad \forall \pi \in S_k.
\end{equation}
The equality in~\eqref{eq:SDP_perm_cov_red_1} can hence be written as follows:
\begin{align}
    \iff \beta_r^{y_{\pi(1)}^{\pi(k)}} &= \beta_{f_{\pi}\!(r)}^{y_{1}^{k}}\qquad \forall \pi \in S_k, r\in [m_X],\label{eq:SDP_perm_cov_reduced}
\end{align}
where we have once again used the fact that $\left\{C^r_{X_1^n}\right\}$ is a set of orthogonal operators. The statement in~\ref{it:computable_index} implies that $f_{\pi}(r)$ can be computed in $O(\operatorname{poly}(n))$ for every $\pi \in S_k$ and $r\in [m_X]$. Combining~\ref{it:computable_trace} and~\ref{it:computable_index}, we conclude that the conditions in~\eqref{eq:SDP_perm_cov_reduced} can be tested for an arbitrary set of complex numbers $\left\{\beta_{r}^{y_{1}^{k}}\right\}_{r\in [m_X], y_1^k\in \{0,1\}^k}$ in $O(\operatorname{poly}(n))$ time.

\subsubsection{Constraint in \texorpdfstring{\eqref{eq:SDP_PPT}}{25}}

Similar to the arguments presented in the previous section, we can identify that the partial transpose map $\bigotimes_{i=1}^n\bigotimes_{j=1}^lT_{B_{(i,j)}^{(i,k)}}$ commutes with $\mathcal{W}^{\sigma}_{X_1^n}$ for every $\sigma \in S_n$ and $l\in [k]$. Therefore,
\begin{equation}
   \bigotimes_{i=1}^n\bigotimes_{j=1}^lT_{B_{(i,j)}^{(i,k)}}\!\left(C^r_{X_1^n}\right) = C^{f_{T_l}(r)}_{X_1^n} \qquad \forall l\in [k]
\end{equation}
for some integer function $f_{T_l}$. For every pair of basis elements $\left(|i\rangle_{X_1^n},|j\rangle_{X_1^n}\right)$ that uniquely determines $r$, we can compute $\bigotimes_{i=1}^n\bigotimes_{j=1}^lT_{B_{(i,j)}^{(i,k)}}\!\left(|i\rangle\!\langle j|\right)$ in $O(\operatorname{poly}(n))$ time, which implies that $f_{T_l}(r)$ can be computed in $O(\operatorname{poly}(n))$ time due to~\ref{it:computable_index}. 

The condition in~\eqref{eq:SDP_PPT} can now be written as follows:
\begin{align}
    \bigotimes_{i=1}^n\bigotimes_{j=1}^lT_{B_{(i,j)}^{(i,k)}}\!\left(Q^{y_1^k}_{X_1^n}\right) &\ge 0 \quad \forall l\in [k],y_1^k\in \{0,1\}^k\\
    \iff \sum_{r\in m_X}\beta^{y_1^k}_r \bigotimes_{i=1}^n\bigotimes_{j=1}^lT_{B_{(i,j)}^{(i,k)}}\!\left(C^r_{X_1^n}\right) &\ge 0 \quad \forall l\in [k],y_1^k\in \{0,1\}^k\\
    \iff \sum_{r\in m_X}\beta^{y_1^k}_r C^{f_{T_l}\!(r)}_{X_1^n} &\ge 0 \quad \forall l\in [k],y_1^k\in \{0,1\}^k\\
    \iff \sum_{r\in m_X}\beta^{y_1^k}_r \phi\!\left(C^{f_{T_l}\!(r)}_{X_1^n}\right) &\ge 0 \quad \forall l\in [k],y_1^k\in \{0,1\}^k\\
    \iff \sum_{r\in m_X}\beta^{y_1^k}_r \left\llbracket\phi\!\left(C^{f_{T_l}\!(r)}\right)\right\rrbracket_i &\ge 0 \quad \forall l\in [k],y_1^k\in \{0,1\}^k, i\in [t_X],\label{eq:SDP_PPT_reduced}
\end{align}
where the final two inequalities follow from~\ref{it:computable_block}. Moreover, the matrix $\left\llbracket\phi\!\left(C^{f_{T_l}\!(r)}\right)\right\rrbracket_i$ can be computed in $O(\operatorname{poly}(n))$ time for every $i\in [t_X]$ and $l\in [k]$, and $t_X \le (n+1)^{|X|}$. Therefore, we conclude that the set of conditions given in~\eqref{eq:SDP_PPT_reduced} can be tested in $O(\operatorname{poly}(n))$ time.

\subsubsection{Constraints in \texorpdfstring{\eqref{eq:SDP_max_eigval}}{26}}

The constraint in~\eqref{eq:SDP_max_eigval} can be written as follows for the $n$-shot case:
\begin{align}
    \sum_{y_2^k}\operatorname{Tr}_{A_1^n}\!\left[Q^{(0,y_2,\ldots,y_k)}_{X_1^n}\right] &\le \lambda I_{B_{(1,1)}^{(n,k)}}\\
    \iff \sum_{y_2^k} \sum_{r\in [m_{Z}]}\beta^{(0,y_2,\ldots,y_k)}_r\operatorname{Tr}_{A_1^n}\!\left[C^r_{X_1^n}\right] &\le \lambda I_{B_{(1,1)}^{(n,k)}}.\label{eq:SDP_max_eigval_Cr}
\end{align}
 Recall from~\ref{it:computable_marg} that there exists a function $f_{\operatorname{Tr}}^A$ such that
\begin{equation}\label{eq:f_tr_A_defn}
    \operatorname{Tr}_{A_1^n}\!\left[C^r_{X_1^n}\right] = \omega_{f_{\operatorname{Tr}}^A(r)}C^{f_{\operatorname{Tr}}^A(r)}_{B_{(1,1)}^{(n,k)}}.
\end{equation}
The constraint in~\eqref{eq:SDP_max_eigval_Cr} can then be written as follows:
\begin{align}
    \sum_{y_2^k} \sum_{r\in [m_{Z}]}\beta^{(0,y_2,\ldots,y_k)}_r\omega_{f_{\operatorname{Tr}}^A(r)}C^{f_{\operatorname{Tr}}^A(r)}_{B_{(1,1)}^{(n,k)}} &\le \lambda I_{B_{(1,1)}^{(n,k)}} \\
    \iff \sum_{y_2^k} \sum_{r\in [m_{Z}]}\beta^{(0,y_2,\ldots,y_k)}_r\omega_{f_{\operatorname{Tr}}^A(r)} \left\llbracket \phi\!\left(C^{f_{\operatorname{Tr}}^A(r)}_{B_{(1,1)}^{(n,k)}}\right)\right\rrbracket_i &\le \lambda I \quad \forall i\in [t_{B_k}]. \label{eq:SDP_max_eigval_reduced}
\end{align}
where the final inequality follows from~\ref{it:computable_block}. Note that the identity in~\eqref{eq:SDP_max_eigval_reduced} is of the same dimension as the matrix $\left\llbracket \phi\!\left(C^{f_{\operatorname{Tr}}^A(r)}_{B_{(1,1)}^{(n,k)}}\right)\right\rrbracket_i$, which scales polynomially with $n$. The value of $f_{\operatorname{Tr}}^A(r)$ can be computed in $O(\operatorname{poly}(n))$ as stated in~\ref{it:computable_marg}. Furthermore, $t_{B_k} \le (n+1)^{|B|^k}$, which, along with the statement of~\ref{it:computable_block}, implies that the condition in~\eqref{eq:SDP_max_eigval_reduced} can be tested in $O(\operatorname{poly}(n))$.

\subsubsection{Constraint in \texorpdfstring{\eqref{eq:SDP_non_sig}}{27}}

The statement in~\ref{it:computable_marg} implies that the constraint in~\eqref{eq:SDP_non_sig} can be written as follows for the $n$-shot case:
\begin{align}
    \sum_{y_k} \operatorname{Tr}_{A_1^n}\!\left[Q^{y_1^k}_{X_1^n}\right] &= \frac{1}{|B|^{n}}\sum_{y_k}\operatorname{Tr}_{A_1^nB_{1,k}B_{2,k}\cdots B_{n,k}}\!\left[Q^{y_1^k}_{X_1^n}\right]\otimes I_{B_{(1,k)}^{(n,k)}} \qquad \forall y_1^{k-1} \in \{0,1\}^{k-1}\\
    \iff \sum_{y_k}\sum_{r\in [m_X]}\beta^{y_1^k}_r\omega_{f_{\operatorname{Tr}}^A(r)}C^{f_{\operatorname{Tr}}^A(r)}_{B_{(1,1)}^{(n,k)}} &= \frac{1}{|B|^n}\sum_{y_k}\sum_{r\in [m_X]}\beta^{y_1^k}_r\omega_{f_{\operatorname{Tr}}^{AB}(r)}C^{f_{\operatorname{Tr}}^{AB}(r)}\otimes I_{B_{(1,k)}^{(n,k)}} \qquad \forall y_1^{k-1} \in \{0,1\}^{k-1},\label{eq:SDP_non_sig_red_1}
\end{align}
where $f_{\operatorname{Tr}}^A$ is defined in~\eqref{eq:f_tr_A_defn} and $f_{\operatorname{Tr}}^{AB}$ is an integer function such that 
\begin{equation}
   \operatorname{Tr}_{A_1^nB_{(1,k)}^{(n,k)}}\!\left[C^r_{X_1^n}\right] = \omega_{f_{\operatorname{Tr}}^{AB}(r)}C^{f_{\operatorname{Tr}}^{AB}(r)}_{B_{(1,1)}^{(n,k-1)}} \qquad \forall r\in [m_X]
\end{equation}
Using the same line of reasoning mentioned in~\ref{sec:SDP_POVM_constraints_red}, we can write the following equality:
\begin{equation}\label{eq:C_otimes_id_C_decomp}
    C^{f_{\operatorname{Tr}}^{AB}(r)}\otimes I_{B_{(1,k)}^{(n,k)}} = \sum_{v\in [m_Z]}\widetilde{\gamma}^{f_{\operatorname{Tr}}^{AB}(r)}_v C^v_{B_{(1,1)}^{(n,k)}} \qquad \forall r \in [m_X],
\end{equation}
where
\begin{equation}
    \widetilde{\gamma}^{f_{\operatorname{Tr}}^{AB}(r)}_v \coloneqq \frac{\operatorname{Tr}\!\left[C^{f^{AB}_{\operatorname{Tr}}(r)}\operatorname{Tr}_{B_{(1,k)}^{(n,k)}}\!\left[C^v_{B_{(1,1)}^{(n,k)}}\right]\right]}{\operatorname{Tr}\!\left[\left(C^v_{B_{(1,1)}^{(n,k)}}\right)^{\dagger}C^v_{B_{(1,1)}^{(n,k)}}\right]}.
\end{equation}
Let us define an integer function $g^B_{\operatorname{Tr}}$ such that
\begin{equation}
    \operatorname{Tr}_{B_{(1,k)}^{(n,k)}}\!\left[C^v_{B_{(1,1)}^{(n,k)}}\right] = \omega_{g^B_{\operatorname{Tr}}(v)}C^{g^B_{\operatorname{Tr}}(v)}_{B_{(1,1)}^{(n,k-1)}}.
\end{equation}
Then,
\begin{equation}
    \widetilde{\gamma}^{f_{\operatorname{Tr}}^{AB}(r)}_v = \omega_{g^B_{\operatorname{Tr}}(v)}\frac{\operatorname{Tr}\!\left[\left(C^{f^{AB}_{\operatorname{Tr}}(r)}\right)^{\dagger}C^{g^B_{\operatorname{Tr}}(v)}\right]}{\operatorname{Tr}\!\left[\left(C^v_{B_{(1,1)}^{(n,k)}}\right)^{\dagger}C^v_{B_{(1,1)}^{(n,k)}}\right]} = \omega_{f^{AB}_{\operatorname{Tr}}(r)}\frac{\operatorname{Tr}\!\left[\left(C^{f^{AB}_{\operatorname{Tr}}(r)}\right)^{\dagger}C^{f^{AB}_{\operatorname{Tr}}(r)}\right]}{\operatorname{Tr}\!\left[\left(C^v_{B_{(1,1)}^{(n,k)}}\right)^{\dagger}C^v_{B_{(1,1)}^{(n,k)}}\right]}\delta_{f^{AB}_{\operatorname{Tr}}(r),g^B_{\operatorname{Tr}}(v)}.
\end{equation}
The equality in~\eqref{eq:C_otimes_id_C_decomp} can now be written as follows:
\begin{equation}
    C^{f_{\operatorname{Tr}}^{AB}(r)}\otimes I_{B_{(1,k)}^{(n,k)}} = \sum_{v\in [m_Z]}\widetilde{\gamma}^{f_{\operatorname{Tr}}^{AB}(r)}_{v}C^{v}_{B_{(1,1)}^{(n,k)}},
\end{equation}
and the equality in~\eqref{eq:SDP_non_sig_red_1} can be written as follows:
\begin{align}
    \sum_{y_k}\sum_{r\in [m_X]}\beta^{y_1^k}_r\omega_{f_{\operatorname{Tr}}^A(r)}C^{f_{\operatorname{Tr}}^A(r)}_{B_{(1,1)}^{(n,k)}} &= \frac{1}{|B|^n}\sum_{y_k}\sum_{r\in [m_X]}\beta^{y_1^k}_r\omega_{f_{\operatorname{Tr}}^{AB}(r)}\sum_{v\in [m_Z]}\widetilde{\gamma}^{f_{\operatorname{Tr}}^{AB}(r)}_v C^v_{B_{(1,1)}^{(n,k)}} \qquad \forall y_1^{k-1} \in \{0,1\}^{k-1}\\
    \iff \sum_{y_k}\sum_{r\in [m_X]}\beta^{y_1^k}_r\omega_{f_{\operatorname{Tr}}^A(r)}C^{f_{\operatorname{Tr}}^A(r)}_{B_{(1,1)}^{(n,k)}} &= \frac{1}{|B|^n}\sum_{y_k}\sum_{r\in [m_X]}\sum_{v\in [m_Z]}\beta^{y_1^k}_r\omega_{f_{\operatorname{Tr}}^{AB}(r)}\widetilde{\gamma}^{f_{\operatorname{Tr}}^{AB}(r)}_v C^v_{B_{(1,1)}^{(n,k)}} \qquad \forall y_1^{k-1} \in \{0,1\}^{k-1}\\
    \iff \sum_{y_k}\sum_{r\in [m_X]}\beta^{y_1^k}_r\omega_{f_{\operatorname{Tr}}^A(r)} &= \frac{1}{|B|^n}\sum_{y_k}\sum_{r\in [m_X]}\beta^{y_1^k}_r\omega_{f_{\operatorname{Tr}}^{AB}(r)}\widetilde{\gamma}^{f_{\operatorname{Tr}}^{AB}(r)}_v  \qquad \forall y_1^{k-1} \in \{0,1\}^{k-1},
\end{align}
where we have used the orthogonality of the operators $\left\{C^r_{B_{(1,1)}^{(n,k)}}\right\}_{r=1}^{m_Z}$ to arrive at the final equality.

\medskip 
\subsubsection{Constraint in \texorpdfstring{\eqref{eq:SDP_type1_err}}{28}}

The constraint in~\eqref{eq:SDP_type1_err} can be written as follows for the $n$-shot case:
\begin{align}
    \frac{1}{|B|^{n(k-1)}}\sum_{y_2^k}\operatorname{Tr}\!\left[\operatorname{Tr}_{B_{(1,2)}^{(n,k)}}\!\left[Q^{\left(0,y_2,\ldots,y_k\right)}_{X_1^n}\right]\left(\Gamma^{\mathcal{N}}_{AB}\right)^{\otimes n}\right] &\ge 1-\varepsilon,\\
    \iff \frac{1}{|B|^{n(k-1)}}\sum_{y_2^k}\sum_{r\in [m_X]}\beta^{\left(0,y_2,\ldots,y_k\right)}_r\operatorname{Tr}\!\left[\operatorname{Tr}_{B_{(1,2)}^{(n,k)}}\!\left[C^r_{X_1^n}\right]\left(\Gamma^{\mathcal{N}}_{AB}\right)^{\otimes n}\right] &\ge 1-\varepsilon\label{eq:SDP_type1_err_red_1}.
\end{align}
Let us define an integer function $h_{\operatorname{Tr}}$ such that
\begin{equation}
    \operatorname{Tr}_{B_{(1,2)}^{(n,k)}}\!\left[C^r_{X_1^n}\right] = \omega_{h_{\operatorname{Tr}}(r)}C^{h_{\operatorname{Tr}}(r)}_{A_1^nB_1^n} \qquad \forall r\in [m_X],
\end{equation}
where $\omega_{h_{\operatorname{Tr}}(r)}$ is an integer. Note that the existence of such a function is guaranteed as per~\ref{it:computable_marg}. 

The operator $\left(\Gamma^{\mathcal{N}}_{AB}\right)^{\otimes n}$ lies in $\operatorname{End}^{S_n}\!\left(\mathcal{H}_{(AB)^{\otimes n}}\right)$. As per~\ref{it:computable_coeff}, one can find a set of complex numbers $\left\{\eta_r\right\}_{r\in [m_{AB}]}$ in $O(\operatorname{poly}(n))$ time such that
\begin{equation}
    \left(\Gamma^{\mathcal{N}}_{AB}\right)^{\otimes n} = \sum_{r\in [m_{AB}]} \eta_r C^r_{A_1^nB_1^n} = \sum_{r\in [m_{AB}]} \eta^*_r \left(C^r_{A_1^nB_1^n}\right)^{\dagger},
\end{equation}
where $m_{AB}\le (n+1)^{|A||B|}$. The second equality above follows from the fact that $\Gamma^{\mathcal{N}}_{AB}$ is Hermitian. The inequality in~\eqref{eq:SDP_type1_err_red_1} can hence be written as follows:
\begin{align}
    \frac{1}{|B|^{n(k-1)}}\sum_{y_2^k}\sum_{r\in [m_X]}\sum_{t\in [m_{AB}]}\beta^{\left(0,y_2,\ldots,y_k\right)}_r\omega_{h_{\operatorname{Tr}}(r)}\eta^{*}_t\operatorname{Tr}\!\left[C^{h_{\operatorname{Tr}}(r)}_{A_1^nB_1^n}\left(C^t_{A_1^nB_1^n}\right)^{\dagger}\right] &\ge 1-\varepsilon\\
    \iff \frac{1}{|B|^{n(k-1)}}\sum_{y_2^k}\sum_{r\in [m_X]}\beta^{\left(0,y_2,\ldots,y_k\right)}_r\omega_{h_{\operatorname{Tr}}(r)}\eta^{*}_{h_{\operatorname{Tr}}(r)}\operatorname{Tr}\!\left[C^{h_{\operatorname{Tr}}(r)}_{A_1^nB_1^n}\left(C^{h_{\operatorname{Tr}}(r)}_{A_1^nB_1^n}\right)^{\dagger}\right] &\ge 1-\varepsilon.
\end{align}
Recall from~\ref{it:computable_trace} that $\operatorname{Tr}\!\left[C^{h_{\operatorname{Tr}}(r)}_{A_1^nB_1^n}\left(C^{h_{\operatorname{Tr}}(r)}_{A_1^nB_1^n}\right)^{\dagger}\right]$ can be computed in $O(\operatorname{poly}(n))$ time for every $r\in [m_X]$. 

\subsubsection{Final SDP}

Collecting all the results, we conclude that solving the SDP $S_C\left(\mathcal{N}^{\otimes n},\varepsilon, k\right)$ is equivalent to solving the following SDP:

\begin{equation}\label{eq:eff_comp_SDP}
        \inf_{\substack{\lambda \ge 0, \left\{\alpha_r\right\}_{r\in [m_A]}\in \mathbb{C}^{[m_A]}\\ \left\{\beta^{y_1^k}_r\right\}_{\substack{r\in [m_X],\\ y_1^k\in \{0,1\}^k}}\in \mathbb{C}^{2^k[m_X]}}} \left\{\begin{array}{c}
            \lambda: \\
            \sum_{y_1^k \in \{0,1\}^k} \beta_r^{y_1^k} = \alpha_{f^B_{\operatorname{Tr}}(r)}\gamma^{f^B_{\operatorname{Tr}}(r)}_r \quad \forall r\in [m_X],\\
            \sum_{r\in [m_X]} \beta^{y_1^k}_r\left\llbracket\phi\!\left(C^r\right)\right\rrbracket_i \ge 0 \qquad \forall y_1^k\in \left\{0,1\right\}^k, i\in [t_X],\\
            \beta_r^{y_{\pi(1)}^{\pi(k)}} = \beta_{f_{\pi}\!(r)}^{y_{1}^{k}}\qquad \forall \pi \in S_k, r\in [m_X],\\
            \sum_{r\in m_X}\beta^{y_1^k}_r \left\llbracket\phi\!\left(C^{f_{T_l}\!(r)}\right)\right\rrbracket_i \ge 0 \quad \forall l\in [k],y_1^k\in \{0,1\}^k, i\in [t_X],\\
            \sum_{y_2^k} \sum_{r\in [m_{Z}]}\beta^{(0,y_2,\ldots,y_k)}_r\omega_{f_{\operatorname{Tr}}^A(r)} \left\llbracket \phi\!\left(C^{f_{\operatorname{Tr}}^A(r)}_{B_{(1,1)}^{(n,k)}}\right)\right\rrbracket_i \le \lambda I \quad \forall i\in [t_{B_k}],\\
            \sum_{y_k}\sum_{r\in [m_X]}\beta^{y_1^k}_r\omega_{f_{\operatorname{Tr}}^A(r)} = \frac{1}{|B|^n}\sum_{y_k}\sum_{r\in [m_X]}\beta^{y_1^k}_r\omega_{f_{\operatorname{Tr}}^{AB}(r)}\widetilde{\gamma}^{f_{\operatorname{Tr}}^{AB}(r)}_v  \qquad \forall y_1^{k-1} \in \{0,1\}^{k-1},\\
            \frac{1}{|B|^{n(k-1)}}\sum_{y_2^k}\sum_{r\in [m_X]}\beta^{\left(0,y_2,\ldots,y_k\right)}_r\omega_{h_{\operatorname{Tr}}(r)}\eta^{*}_{h_{\operatorname{Tr}}(r)}\operatorname{Tr}\!\left[C^{h_{\operatorname{Tr}}(r)}_{A_1^nB_1^n}\left(C^{h_{\operatorname{Tr}}(r)}_{A_1^nB_1^n}\right)^{\dagger}\right] \ge 1-\varepsilon
        \end{array}\right\},
    \end{equation}
where $m_A \le (n+1)^{|A|^2}$, $m_X \le (n+1)^{|A|^2|B|^{2k}}$, $t_X \le (n+1)^{|A||B|^{k}}$, and $t_{B_k} \le (n+1)^{|B|^{k}}$, the functions $f^B_{\operatorname{Tr}}$, $f_{\pi}$, $f_{T_l}$, $f^A_{\operatorname{Tr}}$, $f^{AB}_{\operatorname{Tr}}$, and $h_{\operatorname{Tr}}$ can be computed in $O(\operatorname{poly}(n))$, the matrices $\left\llbracket\phi(C^r)\right\rrbracket_i$ can be computed in $O(\operatorname{poly}(n))$ time, and the parameters $\gamma^r_t$, $\omega_r$, and $\eta_r$ can be computed in $O(\operatorname{poly}(n))$ time.
\end{document}